\documentclass{article}
\usepackage[final]{neurips_2024}

\usepackage[utf8]{inputenc}
\usepackage{microtype}
\usepackage{amsmath,amssymb,amsfonts,amsthm}
\usepackage{booktabs}
\usepackage{array}
\usepackage{graphicx}
\usepackage{float}
\usepackage{subcaption}
\usepackage{enumitem}
\usepackage{xcolor}
\usepackage{url}
\usepackage[colorlinks=true,citecolor=blue,linkcolor=blue,urlcolor=blue]{hyperref}
\usepackage[capitalize,noabbrev]{cleveref}
\usepackage[numbers,sort&compress]{natbib}

\setlist{leftmargin=1.25em,itemsep=0pt,topsep=1pt,parsep=0pt,partopsep=0pt}
\makeatletter
\renewcommand{\section}{\@startsection{section}{1}{0pt}%
  {0.6ex plus 0.2ex minus 0.2ex}%
  {0.35ex plus 0.15ex}%
  {\normalfont\Large\bfseries}}
\renewcommand{\subsection}{\@startsection{subsection}{2}{0pt}%
  {0.45ex plus 0.15ex minus 0.15ex}%
  {0.25ex plus 0.1ex}%
  {\normalfont\large\bfseries}}
\makeatother

\theoremstyle{definition}

\newtheorem{theorem}{Theorem}
\newtheorem{proposition}{Proposition}
\newtheorem{corollary}{Corollary}
\newtheorem{assumption}{Assumption}
\theoremstyle{remark}
\newtheorem{remark}{Remark}

\newcommand{\D}{\mathcal{D}}
\newcommand{\Dobs}{\D_{\mathrm{obs}}}
\newcommand{\Mzero}{\ensuremath{\mathrm{M0}}}
\newcommand{\Mone}{\ensuremath{\mathrm{M1}}}
\newcommand{\Mtwo}{\ensuremath{\mathrm{M2}}}
\newcommand{\Mthree}{\ensuremath{\mathrm{M3}}}
\newcommand{\E}{\mathbb{E}}
\newcommand{\Prb}{\mathbb{P}}

\newcommand{\ind}{\mathbb{1}}

\newcommand{\probup}{p_{\uparrow}}
\newcommand{\tv}{\mathrm{TV}}
\newcommand{\dop}{\mathrm{do}}

\title{Martingale Doppelg\"anger-Eval: An Identification Framework for Auditing Candlestick Understanding in Vision-Language Models}

\author{%
Ziyao Wang\\
Department of Mathematics and Statistics\\
Texas Tech University\\
\texttt{ziywang@ttu.edu}
}

\begin{document}
\maketitle

\begin{abstract}
We introduce Martingale Doppelg\"anger-Eval, a public shadow-market benchmark for auditing whether vision-language models (VLMs) use candlestick evidence rather than extrapolate past trends. The central difficulty is identification: on real market histories, chart evidence and trend are strongly coupled, so an observational score cannot determine whether a fluent technical-analysis narrative is grounded in local visual evidence. We prove this limitation formally: no evaluation functional computed from observational chart--label data can distinguish a grounded responder from a trend-shortcut responder under strong coupling, whereas matched evidence interventions separate the same responders at an exponential rate and trend--label swaps provide an independent shortcut stress test. The benchmark therefore evaluates frozen VLMs on rendered OHLCV charts under four controlled mechanisms: a martingale-null market, injected-alpha counterfactual pairs, trend-confounder swaps, and regime shifts. A structural behavioral model identifies null-market bias, trend sensitivity, evidence sensitivity, prompt/renderer fragility, and evidence faithfulness; the accompanying statistical toolkit provides minimum detectable effects, block-aware sequential testing for metered APIs, and an overlap-weighted artifact check. Across frozen commercial and open VLMs, the identified regression assigns large positive coefficients to past trend but evidence coefficients that are zero or opposite to the rule-implied sign. Matched-pair analyses show that models either ignore injected candlestick semantics or move opposite to the rule-implied direction conditional on responding. The benchmark isolates a failure mode that standard observational chart benchmarks cannot detect and gives a reusable audit template for time-series imagery with controllable label mechanisms.
\end{abstract}

\section{Introduction}

Candlestick charts encode prices, volatility, and volume in a compact visual format. Modern VLMs can describe such charts fluently, but fluency does not establish financial visual understanding. A model may map an upward-sloping chart to ``bullish continuation,'' lower its probability after a risk-oriented prompt, or cite a volume pattern that did not causally affect its prediction. In financial decision support, such confident but unsupported explanations can be mistaken for signal.

The natural evaluation---scoring predictions against realized market futures---does not answer this question. Real-market prediction entangles visual understanding with noise, selection, and nonstationarity. More fundamentally, the visual features that a grounded analyst would use are often coupled to the past trend that a shortcut model extrapolates: breakouts occur at the ends of trends. We show that this is not a finite-sample inconvenience. Under strong trend--evidence coupling, the joint law of charts, labels, and responses is nearly identical for a grounded responder and a trend-shortcut responder; hence no observational evaluation at any feasible sample size separates them (\cref{thm:impossibility}). Matched evidence interventions break the tie by changing candidate candlestick evidence at fixed trend, and trend--label swaps independently test whether a score is carried by momentum. This gap between observational ambiguity and interventional identifiability is the benchmark's design principle. The same logic applies beyond finance, including monitoring waveforms, weather panels, and sensor dashboards whenever diagnostic features co-occur with global shape.

Martingale Doppelg\"anger-Eval instantiates this principle for candlestick charts using public OHLCV data and frozen models only. The benchmark has four splits (\cref{fig:schematic}): \Mzero{}, a martingale-null market that scores the same rendered chart against paired up/down labels or zero-drift stochastic-volatility futures; \Mone{}, injected local candlestick signals with matched counterfactual variants, realizing $\dop(Z_E)$; \Mtwo{}, swapped trend--label relations that decouple momentum from labels; and \Mthree{}, regime shifts that test confidence adaptation. All charts hide tickers, dates, and absolute prices; prices are normalized and rendered in multiple styles.

A common weakness of diagnostic benchmarks is that their metrics appear ad hoc. We address this by placing the audit in a structural behavioral model (\cref{sec:framework}). The model decomposes a reported probability into a null-market intercept, a trend coefficient $\beta_S$, an evidence coefficient $\beta_E$, and prompt/renderer variance components. \Mzero{} and \Mone{} identify the core structural coordinates, while \Mtwo{} supplies an over-identifying shortcut stress test under a controlled trend--label mechanism (\cref{prop:ident,prop:completeness}). We also formulate evidence faithfulness as a causal estimand with an explicit edit-geometry bias condition (\cref{prop:ces}). Because API evaluation is sequential and metered, we provide minimum detectable effects at the released budget and a block-aware e-process protocol for episode-level early stopping (\cref{sec:stats}).

Our contributions are:
\begin{enumerate}[leftmargin=1.4em,itemsep=1pt,topsep=2pt]
  \item \textbf{An identification theorem.} Observational chart evaluation cannot distinguish grounded from trend-shortcut responders under strong trend--evidence coupling, while matched evidence interventions separate them at an exponential rate; martingale-null optimality becomes a corollary (\cref{thm:impossibility,cor:m0-null}).
  \item \textbf{A structural identification map.} A behavioral response model in which \Mzero{} identifies null and trend coordinates, \Mone{} identifies evidence sensitivity, and \Mtwo{} over-identifies momentum shortcuts; the same framework gives a causal-estimand formulation of evidence faithfulness and a quantitative link to proper-scoring-rule excess risk (\cref{prop:ident,prop:completeness,prop:ces,thm:brier}).
  \item \textbf{An evaluation statistics toolkit.} Minimum detectable effects and sample-complexity bounds at the released budget, tie-aware paired testing, block-aware sequential testing for metered API evaluation, and an overlap-weighted artifact robustness check computed from existing logs (\cref{sec:stats,prop:deconfound}).
  \item \textbf{A benchmark and audited baselines.} A reproducible public shadow-market benchmark over frozen commercial and open VLMs. A single identified regression shows $|\beta_S|\gg|\beta_E|$ for every model; open models mostly ignore injected evidence, while commercial models often move in the direction opposite to the injected rule conditional on responding; the artifact check shows that these failures persist on organic-looking windows.
\end{enumerate}

\section{Related Work}

\textbf{Financial prediction benchmarks.} StockNet \citep{xu2018stocknet}, Adv-ALSTM \citep{feng2019advalstm}, and FI-2010 \citep{ntakaris2018fi2010} evaluate movement prediction against realized futures. Classical data-snooping critiques warn that apparent predictability can arise from repeated testing \citep{brock1992technical,white2000reality}. We use public OHLCV sources, but the benchmark does not target tradability: its shadow futures have labels controlled by the generator.

\textbf{Chart and multimodal understanding.} ChartBench \citep{xu2023chartbench} and CharXiv \citep{wang2024charxiv} show that chart understanding requires visual, numerical, and logical reasoning; chart-specialized pipelines include DePlot, Pix2Struct, and MatCha \citep{liu2023deplot,lee2023pix2struct,liu2023matcha}. Martingale Doppelg\"anger-Eval is complementary: it asks whether candlestick interpretation survives controlled financial counterfactuals.

\textbf{Causal identification and invariance.} Our framework adapts standard tools---structural causal models and interventions \citep{pearl2009causality}, propensity scores and overlap weighting \citep{rosenbaum1983propensity,li2018balancing,crump2009dealing}, and invariance-based identification \citep{peters2016causal,arjovsky2019invariant}---to the evaluation of a black-box responder rather than the estimation of a treatment effect. The non-identifiability argument in \cref{thm:impossibility} is a two-point Le Cam construction \citep{tsybakov2009introduction} applied to evaluation functionals.

\textbf{VLMs, shortcuts, and uncertainty.} We evaluate frozen commercial and open VLMs \citep{openai2023gpt4,openai2024gpt4o,li2024llavaonevision,bai2025qwen25vl,zhu2025internvl3}. Our metrics connect to forecast verification and calibration decompositions \citep{brier1950,murphy1973brier,degroot1983comparison,gneiting2007probabilistic}, selective classification \citep{geifman2017selective}, shortcut learning \citep{geirhos2020shortcut}, explanation auditing \citep{ribeiro2016lime}, martingale intuition \citep{fama1970efficient}, and stochastic volatility \citep{bollerslev1986garch}; the sequential protocol builds on anytime-valid, game-theoretic statistics \citep{ville1939etude,ramdas2023game,grunwald2024safe,waudbysmith2024betting}. We follow dataset documentation principles \citep{mitchell2019modelcards,gebru2021datasheets} while emphasizing controlled diagnostics over a single leaderboard score.

\section{Problem Formulation}
\label{sec:formulation}

Let $W_i=\{O_\tau,H_\tau,L_\tau,C_\tau,V_\tau\}_{\tau=t-L+1}^{t}$ be a past OHLCV window with length $L$. We normalize prices by the first close, $\widetilde P_\tau=100P_\tau/C_{t-L+1}$, normalize volume within the window, and render an image $I_i=r(W_i,s)$ under style $s$. A frozen VLM $f$ receives $I_i$ and a prompt $q$ and returns a structured prediction: direction in \{bullish, bearish, uncertain\}, probability $\probup\in[0,1]$, abstention, semantic tags, evidence regions, and a brief explanation.

For the identification analysis we coarsen the window into a scalar trend statistic $S_i$ (past momentum), rule-relevant candlestick semantic variables $Z_{E,i}$ (breakout and volume confirmation, terminal shadow geometry), and nuisance structure $N_i$, so that $I_i=r(S_i,Z_{E,i},N_i,s)$. The pair $(S,Z_E)$ is the central axis of the audit: a trend-shortcut responder is a function of $S$ alone, whereas a grounded responder remains sensitive to $Z_E$ at fixed $S$.

The benchmark distribution is $\D=\D_{\Mzero}\cup\D_{\Mone}\cup\D_{\Mtwo}\cup\D_{\Mthree}$: the image distribution is induced by public OHLCV windows and renderer randomization, while the future label mechanism is controlled by the benchmark rather than by the realized historical future. \Mone{} realizes the intervention $\dop(Z_E\!=\!z)$ at fixed $(S,N)$ via OHLCV-level edits; \Mtwo{} realizes $\dop(S\perp Y)$ and $\dop(S\,\text{anti-aligned with}\,Y)$ via label-mechanism control.

The target is not a trading strategy. The object of evaluation is a conditional response function: when charts look realistic but the label mechanism is known, does the model report uncertainty, react to injected evidence, and keep its explanation aligned with causal edits? A model that predicts realized futures in one historical period may still fail this controlled visual-understanding audit, and \cref{thm:impossibility} shows that observational data alone cannot resolve the ambiguity.

We use candlestick understanding in an operational sense that does not presume technical-analysis rules are profitable. A model that claims to analyze candlestick charts should (i) not invent directional certainty when the mechanism makes the future independent of the chart, (ii) move its probability and tags in the rule-implied direction when a rule-relevant component is edited with everything else held fixed, and (iii) cite evidence regions whose edits matter more than matched non-evidence edits. These requirements are modest, falsifiable, and independent of real-market alpha.

The main metrics are: null overconfidence $\E_{\Mzero}|\probup-1/2|$; the trend-bias index $\mathrm{TBI}=|\E[\probup\mid M_i\in Q_{\mathrm{top}}]-\E[\probup\mid M_i\in Q_{\mathrm{bottom}}]|$ over extreme momentum quintiles; injected-signal AUC and pairwise signal sensitivity (PSS, with exact ties scored as $1/2$); the shortcut gap $\mathrm{AUC}(\Mtwo_{\mathrm{align}})-\mathrm{AUC}(\Mtwo_{\mathrm{reverse}})$; prompt fragility (PFI); renderer fragility (RFI); and the causal evidence score $\mathrm{CES}=|f(I)-f(I^{\mathrm{edit}(E)})|-|f(I)-f(I^{\mathrm{edit}(\bar E)})|$, positive when editing the model's stated evidence moves the score more than a matched non-evidence edit. \Cref{sec:framework} shows that these metrics target distinct structural coordinates rather than one aggregate leaderboard score. \Cref{tab:metrics,tab:splits} (appendix) summarize the metric--split--failure-mode map.

\section{Benchmark Construction}
\label{sec:construction}

\textbf{Data sources and leakage controls.} We use public OHLCV windows from stock-prediction benchmarks and public market-data sources, with recorded source manifests, symbol lists, download times, cleaning rules, seeds, and hashes. Raw tickers, names, dates, and absolute prices are removed from images; axes use relative scales. These controls limit two leakage modes: memorizing a real ticker's future and exploiting non-chart metadata.

\textbf{Rendering and object metadata.} Each OHLCV window is rendered under several styles (red-up/green-down, green-up/red-down, hollow black-white, with/without volume, moving averages, axes, and different backgrounds). The renderer stores object-level metadata---candle bodies, wicks, volume bars, moving-average locations, rule-relevant regions---used for counterfactual editing and evidence scoring but never shown to the VLM.

\textbf{\Mzero: martingale-null.} In \Mzero-A, a single chart is paired with both virtual labels $Y^+=1$ and $Y^-=0$; in \Mzero-B, future returns follow a zero-drift stochastic-volatility process that preserves volatility clustering; in \Mzero-C, momentum buckets are label-balanced. The target behavior is calibrated uncertainty, not accuracy.

\textbf{\Mone: injected alpha.} We inject localized candlestick signals (volume-confirmed breakouts, long-shadow reversals, failed breakouts, volatility contraction--expansion, conflict cases), each with a matched counterfactual pair that changes only the relevant semantic components. This pair design is stricter than accuracy: if two charts differ only in final breakout confirmation or volume evidence, a grounded model should move its probability in the rule-implied direction.

\textbf{\Mtwo{} and \Mthree.} \Mtwo{} constructs aligned, balanced, and reverse trend-label relations; a momentum-only shortcut looks strong in the aligned split and fails in the reverse split. \Mthree{} varies the regime (trend, range, high-volatility) in which the same local pattern appears; in high-volatility regimes the desired behavior is lower coverage rather than a deterministic rule.

\begin{figure}[t]
  \centering
  \captionsetup[subfigure]{font=scriptsize,labelfont=bf,justification=centering,skip=1pt}
  \begin{subfigure}[t]{0.31\linewidth}
    \centering
    \includegraphics[width=\linewidth]{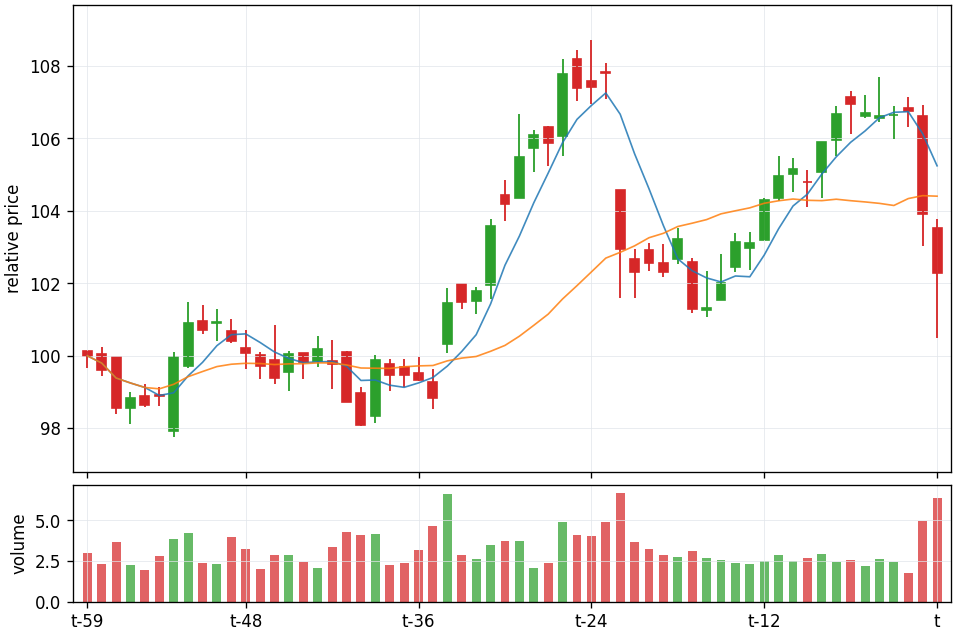}
    \caption{Observed-market future.}
  \end{subfigure}\hfill
  \begin{subfigure}[t]{0.31\linewidth}
    \centering
    \includegraphics[width=\linewidth]{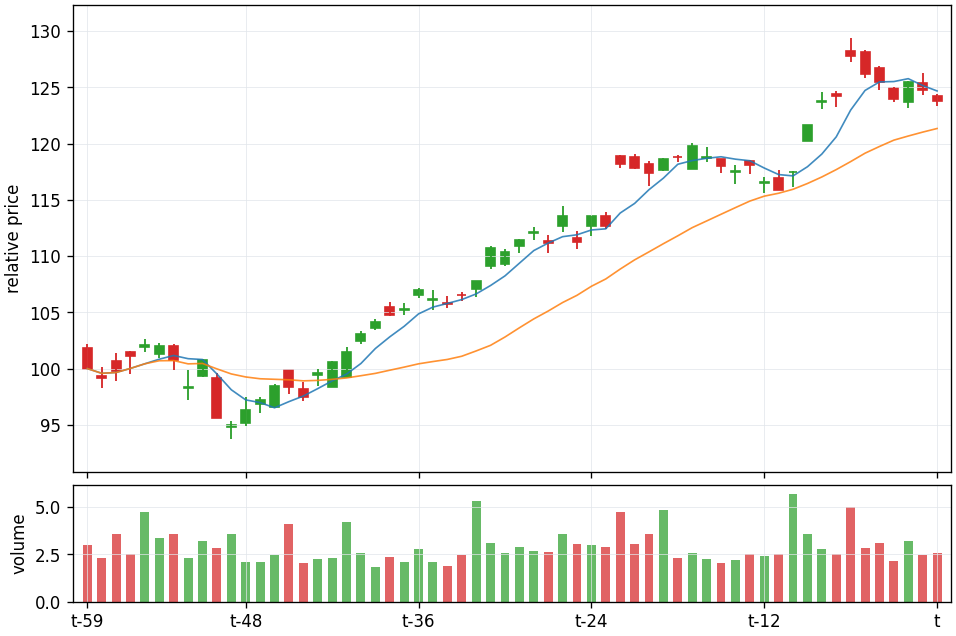}
    \caption{\Mzero{}: null labels.}
  \end{subfigure}\hfill
  \begin{subfigure}[t]{0.31\linewidth}
    \centering
    \includegraphics[width=\linewidth]{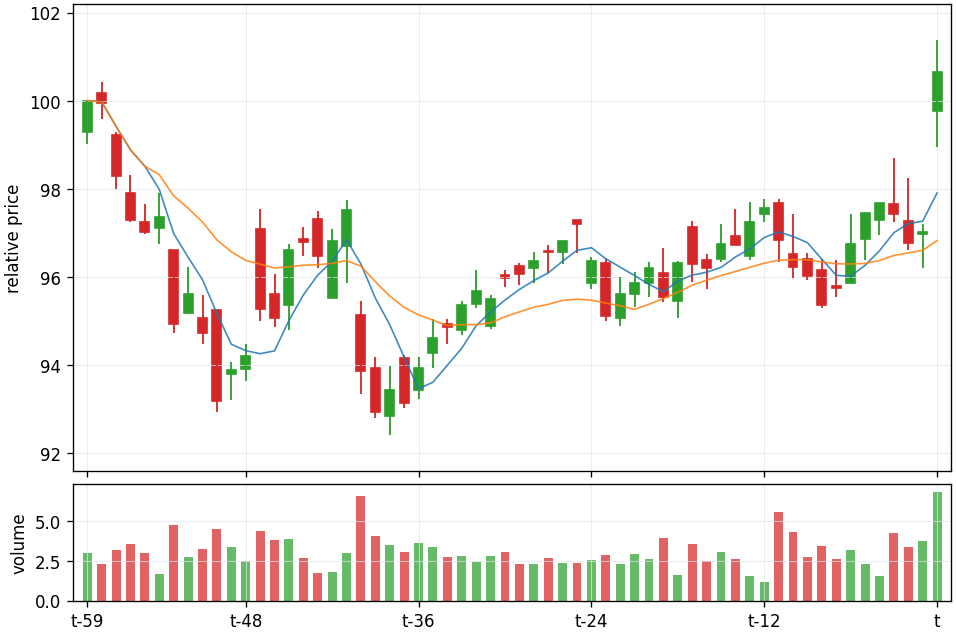}
    \caption{\Mone{}: injected evidence.}
  \end{subfigure}

  \vspace{0.25em}

  \begin{subfigure}[t]{0.39\linewidth}
    \centering
    \includegraphics[width=\linewidth]{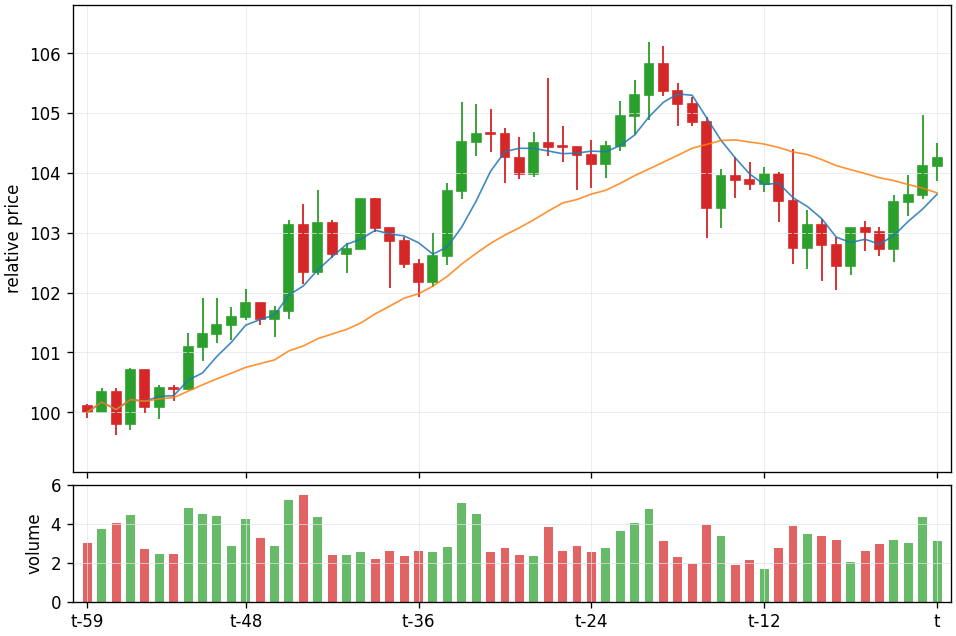}
    \caption{\Mtwo{}: trend--label swap.}
  \end{subfigure}\hspace{0.07\linewidth}
  \begin{subfigure}[t]{0.39\linewidth}
    \centering
    \includegraphics[width=\linewidth]{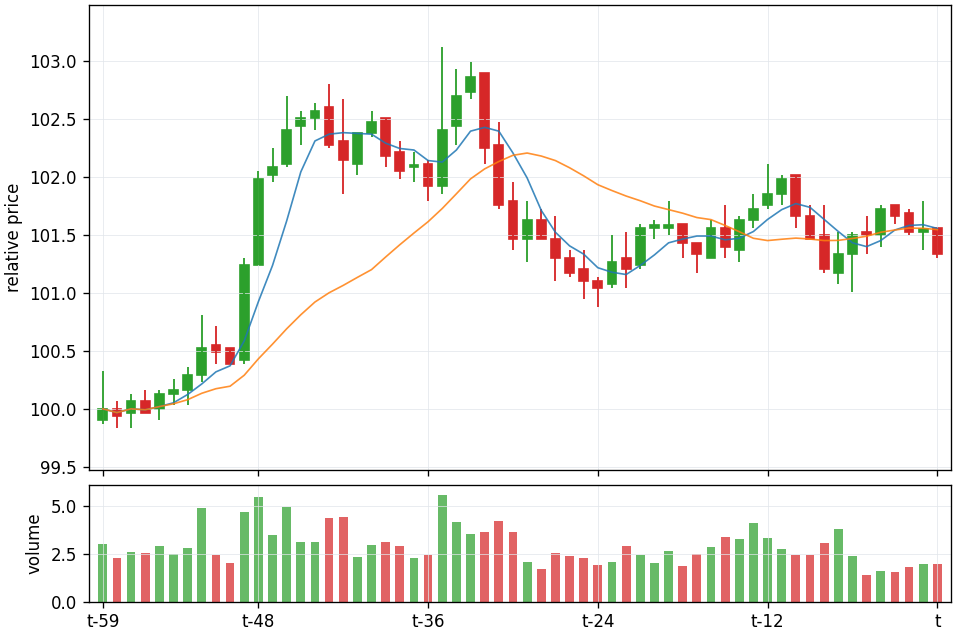}
    \caption{\Mthree{}: regime shift.}
  \end{subfigure}
  \caption{Shadow-market label mechanisms. The visible chart grammar is held fixed while the future-label mechanism changes: real-market validation uses realized futures; \Mzero{} balances or simulates null futures; \Mone{} edits localized candlestick evidence; \Mtwo{} controls the trend--label relation; and \Mthree{} changes the regime in which similar local patterns appear.}
  \label{fig:schematic}
  \vspace{0.45em}
\end{figure}

\textbf{Prompt protocol and parsing.} All models receive the same structured prompt family requesting a compact JSON object (direction, $\probup$, confidence, abstention, semantic tags, evidence regions, short explanation). We avoid chain-of-thought because the benchmark evaluates observable decisions. Variants include directional, risk-framed, technical-tagging, abstention-encouraging, rule-card, and counter-prompts. A conservative parser scores a prediction only when direction, $\probup$, and abstention are unambiguous; invalid outputs remain in raw logs but are not coerced into probabilities.

\textbf{Counterfactual edits.} The matched pairs in \Mone{} and the evidence edits for CES are performed at the OHLCV level before rendering, not by masking pixels, so the model is never shown a gray patch or blur. For instance, a volume-confirmed breakout becomes a failed breakout by lowering the final close below the range boundary and reducing the final volume bar, while older candles and style remain fixed; non-evidence controls edit a visually comparable but semantically irrelevant segment with matched magnitude (\cref{asm:edits}). CES is therefore a paired intervention on chart semantics rather than a generic saliency score.

\textbf{Artifact release.} The reproducibility package contains source manifests, generation and renderer code, prompt banks, parsers, and evaluation scripts; when a raw market source cannot be redistributed, the release provides indices, hashes, and scripts instead (\cref{app:protocol}).

\section{An Identification Framework for Chart-Understanding Audits}
\label{sec:framework}

All proofs appear in \cref{app:proofs}.

\subsection{Observational evaluation cannot identify grounding}
\label{sec:impossibility}

Call a response function $f$ mapping images to $[0,1]$ a trend-shortcut responder if $f(r(S,Z_E,N,s))=h(S)$ for some $h$, and a grounded responder if it is strictly monotone in $Z_E$ at every fixed $(S,N,s)$. An observational audit of length $m$ draws $(I_1,Y_1),\dots,(I_m,Y_m)$ from a market distribution $\Dobs$, queries the model once per image, and decides whether the responder is grounded or shortcut-driven from the transcript $\{(I_i,Y_i,f(I_i))\}_{i\le m}$.

\begin{theorem}[Observational non-identifiability; interventional identifiability]
\label{thm:impossibility}
For every coupling level $\varepsilon\in(0,1)$ there exist a market distribution $\Dobs^{\varepsilon}$, a grounded responder $f_g$, and a trend-shortcut responder $f_s$ such that:
\begin{enumerate}[leftmargin=1.6em,itemsep=0pt,topsep=2pt]
\item[(a)] \textbf{(Impossibility.)} $\tv\big(\mathcal{L}_{\Dobs^\varepsilon}^{m}(f_g),\,\mathcal{L}_{\Dobs^\varepsilon}^{m}(f_s)\big)\le \min\{1,m\varepsilon\}$, so every observational audit of length $m$ has worst-case error at least $\tfrac12(1-m\varepsilon)_+$; as the coupling becomes deterministic ($\varepsilon\to0$), no observational audit of any fixed length beats chance.
\item[(b)] \textbf{(Identification under intervention.)} Under the \Mone{} design---$n$ matched pairs realizing $\dop(Z_E\!=\!1)$ vs.\ $\dop(Z_E\!=\!0)$ at shared $(S,N,s)$---the tie-aware paired sign statistic separates $f_g$ from $f_s$ with error at most $\exp(-n\delta^2/2)$, where $\delta>0$ is the induced hit-rate margin of $f_g$. Under an \Mtwo{} design in which non-trend evidence is balanced across labels, a momentum-only shortcut is separated from an $S$-invariant responder by the sign of its aligned--reverse AUC gap at the same exponential rate.
\end{enumerate}
\end{theorem}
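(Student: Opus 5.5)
Part (a) uses a two-point construction in the style of Le Cam. I will take $S,Z_E\in\{0,1\}$, fix the nuisance $N$ and the style $s$, and let $\Dobs^\varepsilon$ draw $S\sim\mathrm{Bernoulli}(1/2)$. Conditionally on $S$, the evidence is $Z_E=S$ with probability $1-\varepsilon$ and $Z_E=1-S$ with probability $\varepsilon$. The label $Y$ is drawn from any kernel depending on $(S,Z_E)$, for instance $Y=Z_E$.

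The two responders are
\[
f_g(r(S,Z_E,N,s))=\tfrac14+\tfrac12 Z_E,\qquad f_s(r(S,Z_E,N,s))=\tfrac14+\tfrac12 S .
\]
Here $f_g$ is strictly monotone in $Z_E$ at every fixed $(S,N,s)$, and $f_s$ is a function of $S$ alone. I assume the renderer is injective in $(S,Z_E)$, so both maps are well defined on images.

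One observational draw produces $(I,Y,f(I))$. The image and the label have the same law under both responders, because only the third coordinate depends on $f$. Couple the two transcripts by sharing $(I,Y)$. The third coordinates then differ only on the event $\{Z_E\neq S\}$, which has probability $\varepsilon$. The coupling inequality therefore gives per-sample total variation at most $\varepsilon$.

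For $m$ i.i.d.\ draws, subadditivity of total variation over product measures (equivalently, a union bound under the product coupling) gives at most $m\varepsilon$. Total variation is also trivially at most $1$, which yields the $\min\{1,m\varepsilon\}$ bound. The standard two-point lower bound says that any test $\psi$ satisfies
\[
\max_{j\in\{g,s\}}\Prb_j(\text{error})\ \ge\ \tfrac12\big(1-\tv\big)\ \ge\ \tfrac12(1-m\varepsilon)_+ .
\]
Letting $\varepsilon\to0$ at fixed $m$ drives this to $1/2$.

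For part (b), I will use the \Mone{} pairs $(I_k^1,I_k^0)$ with shared $(S_k,N_k,s_k)$. Define the tie-aware score
\[
D_k=\ind\{f(I_k^1)>f(I_k^0)\}+\tfrac12\ind\{f(I_k^1)=f(I_k^0)\}\in[0,1].
\]
Under $f_s$ every pair ties, so $D_k\equiv1/2$ almost surely. Under $f_g$, strict monotonicity gives $\E D_k=\tfrac12+\delta$ with $\delta>0$. In the deterministic construction $\delta=1/2$; more generally $\delta$ is defined as this margin, which covers noisy or randomized responders.

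The test declares "grounded" when $\bar D_n>\tfrac12+\delta/2$. Type I error is zero under $f_s$. Under $f_g$, Hoeffding's inequality for $[0,1]$-bounded variables gives
\[
\Prb\big(\bar D_n\le \tfrac12+\delta/2\big)\le \exp\!\big(-2n(\delta/2)^2\big)=\exp(-n\delta^2/2).
\]
This is exactly the stated rate. If the responders may be queried with stochasticity under the null, one can instead threshold at $1/2$ and use the one-sided bound on each side.

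For the \Mtwo{} claim, the Mann--Whitney form of the AUC is a U-statistic bounded in $[0,1]$. Consider first an $S$-invariant responder. Because non-trend evidence is balanced across labels in both the aligned and the reverse splits, its score distributions given $Y=1$ and $Y=0$ coincide within each split. Both AUCs therefore have mean $1/2$, and the expected gap is $0$.

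A momentum-only responder $h(S)$ with $h$ increasing has $\mathrm{AUC}_{\mathrm{align}}=\tfrac12+\gamma$ and $\mathrm{AUC}_{\mathrm{reverse}}=\tfrac12-\gamma$, so its gap is $2\gamma>0$. Here $\gamma$ comes from the strength of the trend--label relation. McDiarmid's inequality (bounded differences of order $1/n$ per sample for the two-sample AUC) then yields concentration of the empirical gap at rate $\exp(-c\,n\gamma^2)$. Thresholding the gap at $\gamma$ separates the two hypotheses.

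The main obstacle is matching the constant in the \Mtwo{} rate to "the same exponential rate." AUC is a two-sample U-statistic, so the bounded-difference constant depends on the class sizes. I would fix balanced splits of $n/2$ per class and define the margin $\delta$ for \Mtwo{} as half the population gap. With that normalization the exponent becomes $n\delta^2/2$ up to an absolute constant; if exact agreement fails, I would state the rate only up to such a constant.

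A secondary subtlety in (a) is that "worst-case error" needs a prior or a minimax formulation. I will state it as the maximum of the two error probabilities, which is at least their average, and that average is what the total-variation bound controls.
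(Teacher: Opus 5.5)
Your proposal is correct and follows the paper's proof essentially step for step. The shared ingredients are an XOR-type coupling of $Z_E$ with the trend indicator (you take $S$ binary where the paper thresholds a continuous $S$); the shared-draw coupling with a union bound plus the Le Cam two-point bound for (a); and, for (b), a midpoint-threshold Hoeffding test on the paired score together with bounded-difference concentration of the AUC. The one slip is your aside for a stochastic null: thresholding at $1/2$ gives no exponential control when the null mean is exactly $1/2$, so keep the threshold at $\tfrac12+\delta/2$ and apply Hoeffding under both hypotheses (with exchangeable pair noise ensuring the null mean is $1/2$), which is what the paper does and yields $\exp(-n\delta^2/2)$ on each side.
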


The construction places breakout evidence at trend ends with probability $1-\varepsilon$, matching the dependence that motivates the audit; the proof is a two-point argument \citep{tsybakov2009introduction} on the induced transcript laws. The theorem turns the benchmark design from an empirical preference into an identification requirement. It also limits what leaderboard-style chart benchmarks can claim: no score computed under a fixed observational chart--label joint can certify evidence use.

\begin{corollary}[Null optimality of paired \Mzero{} labels]
\label{cor:m0-null}
For \Mzero-A, suppose a model is evaluated by any strictly proper binary scoring rule on the paired labels $\{(I_i,1),(I_i,0)\}$ with equal weight. The unique Bayes-optimal probability for every image is $\probup=1/2$, and abstention (scored as $p=1/2$) is equally optimal. Consequently, any systematic dependence of $\probup$ on past momentum in \Mzero{} is a property of the model, not evidence of directional alpha under the benchmark mechanism.
\end{corollary}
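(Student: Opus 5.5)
The plan is a direct Bayes-risk computation for a strictly proper rule; none of the machinery of \cref{thm:impossibility} is needed. Fix an image $I_i$ and let $p=\probup(I_i)$ be the reported probability. Write $\ell(p,y)$ for the loss form of a strictly proper binary scoring rule. Under \Mzero-A the image is scored against both $Y^+=1$ and $Y^-=0$ with equal weight, so the per-image expected loss is
\begin{equation*}
R_i(p)=\tfrac12\,\ell(p,1)+\tfrac12\,\ell(p,0).
\end{equation*}
This is the expected loss of forecast $p$ when the outcome is Bernoulli$(1/2)$. By definition of strict propriety, $\E_{Y\sim\mathrm{Bern}(q)}\ell(p,Y)$ is uniquely minimized at $p=q$. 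Taking $q=1/2$ shows that $R_i$ has the unique minimizer $p=1/2$.

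The aggregate score is $\sum_i R_i(\probup(I_i))$, or its average over the \Mzero-A image distribution. It decomposes additively across images, and each summand depends only on that image's report. Hence the global minimizer over all response functions is exactly the pointwise minimizer, $\probup(I)=1/2$ for every $I$. Uniqueness holds up to measure-zero sets of images: if some report differs from $1/2$ on a set of positive mass, then the strict inequality $R_i(p)>R_i(1/2)$ on that set strictly increases the aggregate. For abstention, since it is scored as $p=1/2$, it attains the same minimal value $R_i(1/2)$ and is therefore equally optimal. It is not uniquely optimal, because it ties with reporting exactly $1/2$.

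For the ``consequently'' clause, I will observe that the Bayes act $1/2$ does not depend on $I$, and therefore not on any function of $I$ such as the momentum statistic $S$ or the trend statistic $M_i$. Under the benchmark mechanism the label is constructed independently of the chart, since both labels are assigned to every image. So the conditional label law given any chart feature is Bern$(1/2)$. Any systematic variation of $\probup$ with momentum therefore strictly increases expected score and cannot reflect label information. It is a property of the model $f$.

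The main obstacle is the degenerate-case bookkeeping for uniqueness. Some strictly proper rules, such as log loss, take value $+\infty$ at $p\in\{0,1\}$, so I need to state that the argument works on the extended reals. Uniqueness also has to be stated almost surely with respect to the image distribution. Everything else follows directly from the definition of strict propriety.
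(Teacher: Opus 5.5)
Your proposal is correct and follows the paper's proof step for step. The per-image risk $\tfrac12\ell(p,1)+\tfrac12\ell(p,0)$ is the Bernoulli$(1/2)$ expected loss, strict propriety gives the unique minimizer $p=1/2$, abstention ties by construction, and conditioning on any chart feature leaves the label probability at $1/2$; your extra bookkeeping (pointwise minimization of the aggregate, almost-sure uniqueness, extended-real losses) is a harmless refinement that the paper leaves implicit.
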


The next result translates the two \Mzero{} diagnostics into proper-score regret, in the spirit of the Murphy decomposition \citep{murphy1973brier,degroot1983comparison}.

\begin{theorem}[Calibration--decision link]
\label{thm:brier}
On \Mzero-A with quintile buckets ($q=0.2$ mass each), the excess Brier risk of a responder $p$ over the optimal report $1/2$ satisfies
\[
  \E\,B(p,Y)-B(\tfrac12,Y)\;=\;\E\big[(p-\tfrac12)^2\big]\;\ge\;\max\Big\{\mathrm{OverConf}^2,\ \tfrac{q}{2}\,\mathrm{TBI}^2\Big\},
\]
and under log loss the excess risk is at least $2\max\{\mathrm{OverConf}^2,(q/2)\,\mathrm{TBI}^2\}$.
\end{theorem}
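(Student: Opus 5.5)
We prove the identity first and then the two lower bounds. Brier loss is $B(p,Y)=(p-Y)^2$. On \Mzero-A each image appears with labels $Y=1$ and $Y=0$, each with weight $1/2$, so for a fixed report $p$ we get
\[
\tfrac12(p-1)^2+\tfrac12 p^2 \;=\; (p-\tfrac12)^2+\tfrac14 .
\]
The excess over the report $1/2$ is therefore exactly $(p-\tfrac12)^2$ pointwise. Averaging over images gives $\E B(p,Y)-B(\tfrac12,Y)=\E[(p-\tfrac12)^2]$. This is the same computation that underlies \cref{cor:m0-null}.

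For the overconfidence bound, recall $\mathrm{OverConf}=\E|p-\tfrac12|$. Jensen's inequality (equivalently Cauchy--Schwarz) gives $\E[(p-\tfrac12)^2]\ge(\E|p-\tfrac12|)^2=\mathrm{OverConf}^2$.

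For the trend-bias bound, write $\mu_{\mathrm{top}}=\E[p\mid Q_{\mathrm{top}}]$ and $\mu_{\mathrm{bot}}=\E[p\mid Q_{\mathrm{bottom}}]$. Each bucket has mass $q$. Dropping the middle quintiles and applying Jensen within each extreme bucket gives
\[
\E[(p-\tfrac12)^2]\;\ge\; q\,(\mu_{\mathrm{top}}-\tfrac12)^2+q\,(\mu_{\mathrm{bot}}-\tfrac12)^2 .
\]
By the elementary inequality $a^2+b^2\ge\tfrac12(a-b)^2$ with $a=\mu_{\mathrm{top}}-\tfrac12$ and $b=\mu_{\mathrm{bot}}-\tfrac12$, the right-hand side is at least $\tfrac q2(\mu_{\mathrm{top}}-\mu_{\mathrm{bot}})^2=\tfrac q2\,\mathrm{TBI}^2$. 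The two bounds together give the stated maximum.

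For log loss, the pointwise excess over $p=\tfrac12$ under the balanced labels is
\[
-\tfrac12\log p-\tfrac12\log(1-p)-\log 2 \;=\; \mathrm{KL}\big(\mathrm{Ber}(\tfrac12)\,\|\,\mathrm{Ber}(p)\big).
\]
Pinsker's inequality in its Bernoulli form $\mathrm{KL}(\mathrm{Ber}(a)\|\mathrm{Ber}(b))\ge 2(a-b)^2$ (natural logarithm) bounds this below by $2(p-\tfrac12)^2$. Taking expectations and reusing the two Brier lower bounds yields $2\max\{\mathrm{OverConf}^2,(q/2)\mathrm{TBI}^2\}$.

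The main point needing care is the bookkeeping in the trend-bias step. We must use the exact-mass assumption $q=0.2$ for both buckets, and the buckets must be disjoint so that their contributions add. The absolute value in $\mathrm{TBI}$ is harmless because the bound involves its square. Everything else is a routine application of Jensen and Pinsker.
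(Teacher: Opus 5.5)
Your proof is correct and follows the paper's argument essentially step for step. Both use the pointwise identity $\tfrac12(p-1)^2+\tfrac12p^2=\tfrac14+(p-\tfrac12)^2$, Jensen for the OverConf bound, dropping the middle quintiles plus within-bucket Jensen and $a^2+b^2\ge\tfrac12(a-b)^2$ for the TBI bound, and Pinsker applied to $\mathrm{KL}(\mathrm{Ber}(\tfrac12)\,\|\,\mathrm{Ber}(p))$ for the log-loss version.
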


For example, the observed $\mathrm{TBI}=0.32$ and $\mathrm{OverConf}=0.22$ of Qwen2.5-VL-7B force excess Brier risk of at least $0.010$ and $0.050$ respectively, against a maximum possible excess of $0.25$: each diagnostic lower-bounds a regret that any downstream user of the probabilities would pay.

\subsection{A structural behavioral model and what each split identifies}
\label{sec:structural}

We use a working response model on the link scale. For image $i$ rendered in style $s$ and queried with prompt $q$,
\begin{equation}
\label{eq:structural}
  \sigma^{-1}(\probup^{(i,q,s)})\;=\;\alpha+\beta_S\,\widetilde S_i+\beta_E\,E_i+\gamma_q+\gamma_s+\varepsilon_i ,
\end{equation}
where $\widetilde S_i$ is standardized past momentum, $E_i\in\{-1,0,+1\}$ is the injected-evidence direction ($0$ when none), $\gamma_q,\gamma_s$ are prompt and style effects, and $\varepsilon_i$ is mean-zero response noise. We do not assume that \cref{eq:structural} is a generative model for any VLM. Its role is to define audit coordinates in which shortcut use, evidence sensitivity, anchoring, and fragility are distinct parameters.

\begin{proposition}[Identification map]
\label{prop:ident}
Under \cref{eq:structural} with $\E[\widetilde S]=0$ and the benchmark designs:
\begin{enumerate}[leftmargin=1.6em,itemsep=0pt,topsep=2pt]
\item[(i)] \Mzero{} identifies $(\alpha,\beta_S)$ on the link scale: $\alpha=\E[\sigma^{-1}(p)]$, $\beta_S=\operatorname{Cov}(\sigma^{-1}(p),\widetilde S)/\operatorname{Var}(\widetilde S)$. The link-scale quintile contrast is proportional to $|\beta_S|$; the probability-scale TBI used in tables is a sign-aligned diagnostic under sign-separated buckets and independent noise. OverConf satisfies $\E|p-1/2|\ge|\E p-1/2|$.
\item[(ii)] The \Mone{} paired contrast identifies $\beta_E$ free of $(\alpha,\beta_S,\gamma)$ because pair members share $(S,N,q,s)$:
\[
\sigma^{-1}(p^{\mathrm{sig}})-\sigma^{-1}(p^{\mathrm{ctr}})
=\beta_E\,\Delta E+\Delta\varepsilon .
\]
the tie-aware PSS is the sign test of $\beta_E$, invariant to any strictly monotone recalibration of $p$.
\item[(iii)] For a momentum-only score with independent noise and balanced non-trend evidence, \Mtwo{} over-identifies trend sensitivity: the aligned--reverse AUC gap has the sign of the momentum effect under non-degenerate trend--label coupling.
\item[(iv)] PFI and RFI identify $\operatorname{Var}(\gamma_q)$ and $\operatorname{Var}(\gamma_s)$, up to query noise, from paired prompt and renderer variation.
\end{enumerate}
\end{proposition}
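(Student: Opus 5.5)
The plan is to verify each part separately by direct moment calculations in \cref{eq:structural}, using one design fact per part: in \Mzero{}, labels are independent of the chart, and injected evidence is absent, so $E_i=0$; in \Mone{}, pair members share $(S,N,q,s)$; in \Mtwo{}, the trend--label relation is controlled while non-trend evidence is balanced across labels. Throughout we write $\eta=\sigma^{-1}(p)$.

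For (i), fix the prompt and style, or average over them with $\E[\gamma_q]=\E[\gamma_s]=0$ absorbed into $\alpha$. On \Mzero{} we have $E_i=0$, so $\eta=\alpha+\beta_S\widetilde S+\varepsilon$. Taking expectations with $\E\widetilde S=0$ and $\E\varepsilon=0$ gives $\alpha=\E\eta$. Taking covariance with $\widetilde S$ gives $\beta_S=\operatorname{Cov}(\eta,\widetilde S)/\operatorname{Var}(\widetilde S)$; this requires $\varepsilon$ to be uncorrelated with $\widetilde S$, which I will state as part of the mean-zero response-noise assumption. The link-scale quintile contrast is $\E[\eta\mid Q_{\mathrm{top}}]-\E[\eta\mid Q_{\mathrm{bottom}}]=\beta_S(\E[\widetilde S\mid Q_{\mathrm{top}}]-\E[\widetilde S\mid Q_{\mathrm{bottom}}])$. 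Here I need $\E[\varepsilon\mid Q]=0$, which follows if the noise is independent of $\widetilde S$. The bracketed factor is a fixed positive constant of the design, so the contrast is proportional to $|\beta_S|$ up to sign.

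For the probability-scale TBI, note that $\sigma$ is increasing. If the buckets are sign-separated (top has $\widetilde S>0$, bottom has $\widetilde S<0$) and the noise is independent of $\widetilde S$, then $\E[\sigma(\alpha+\beta_S\widetilde S+\varepsilon)\mid Q]$ is monotone in $\beta_S\widetilde S$ pointwise. The sign of the top-minus-bottom probability difference therefore matches the sign of $\beta_S$, and it vanishes when $\beta_S=0$. This is the sense in which TBI is a ``sign-aligned diagnostic''; I will claim nothing stronger, such as proportionality. The OverConf inequality is Jensen's inequality for $|\cdot|$.

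For (ii), subtract \cref{eq:structural} for the two pair members. The terms $\alpha$, $\beta_S\widetilde S$, $\gamma_q$ and $\gamma_s$ are identical and cancel, leaving $\beta_E\Delta E+\Delta\varepsilon$. If $\Delta\varepsilon$ is symmetric about zero (for instance, exchangeable noise within a pair), then $\Prb(\text{sign of the difference}=\text{sign of }\beta_E\Delta E)\ge 1/2$, with strict inequality when $\beta_E\neq0$ and $\Delta\varepsilon$ has positive density near zero. Thus the tie-aware PSS, which scores ties as $1/2$, has mean $1/2$ exactly when $\beta_E=0$ and deviates in the direction of $\operatorname{sign}\beta_E$; this makes it the sign test of $\beta_E$. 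Invariance holds because a strictly monotone map $g$ preserves the sign of $p^{\mathrm{sig}}-p^{\mathrm{ctr}}$ and preserves ties.

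For (iii), I will reuse the argument behind \cref{thm:impossibility}(b) rather than redo it. Write the score as $h(\widetilde S)+\text{noise}$ with $h$ monotone in the direction of $\beta_S$. In the aligned split, $Y$ is positively associated with $\widetilde S$; in the reverse split, negatively. Because non-trend evidence is balanced across labels, it contributes the same distribution to both classes and drops out of the AUC comparison. The AUC then equals $\Prb(\text{score}_1>\text{score}_0)$, which is above $1/2$ in the aligned split and below $1/2$ in the reverse split when $\beta_S>0$, and the reverse when $\beta_S<0$. This requires the noise to be independent and identically distributed across labels and the coupling to be non-degenerate. The main obstacle is stating this step precisely. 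The cleanest route is to use a stochastic-dominance fact: if $\widetilde S\mid Y=1$ stochastically dominates $\widetilde S\mid Y=0$, then composing with monotone $h$ and adding independent continuous noise preserves $\Prb(X_1>X_0)>1/2$. I would prove this by conditioning on the noise and applying a standard coupling.

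For (iv), vary $q$ at fixed $(i,s)$. The within-image variance of $\eta$ across prompts equals $\operatorname{Var}(\gamma_q)+\operatorname{Var}(\varepsilon)$. The query-noise term is estimated from repeated identical queries, or else left as the stated ``up to query noise'' additive bias. The same argument with $s$ varied at fixed $(i,q)$ gives $\operatorname{Var}(\gamma_s)$. PFI and RFI are defined as these paired dispersions, so they identify the variance components, possibly after mapping back from the probability scale.
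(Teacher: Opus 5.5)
Your proposal is correct and follows the paper's proof essentially step for step. Both arguments use the link-scale moment identities on \Mzero{} with $E\equiv 0$, and cancel $(\alpha,\beta_S\widetilde S,\gamma_q,\gamma_s)$ in the \Mone{} pair difference, combined with a symmetric-noise sign argument and order-preservation for recalibration invariance. Both then reduce (iii) to the stochastic-dominance argument of \cref{prop:swap} and use the within-window decomposition $\operatorname{Var}(\gamma_q)+\operatorname{Var}(\varepsilon)$ for (iv).

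In three places your version is slightly sharper than the paper's. The TBI sign claim holds globally, not just locally, because $x\mapsto\E\,\sigma(\alpha+x+\varepsilon)$ is strictly increasing and the buckets are sign-separated. Strict PSS separation needs some noise mass near zero, which the paper's ``iff under symmetry'' leaves implicit. And you handle the response noise in (iii) explicitly, whereas the paper applies \cref{prop:swap} only to the systematic component.
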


\begin{proposition}[No single split suffices; \Mtwo{} is an over-identifying stress test]
\label{prop:completeness}
For each single split there exist two responders of the form \cref{eq:structural} with different $(\beta_S,\beta_E)$ that induce identical metric values on that split: \Mzero{} cannot separate $\beta_E>0$ from $\beta_E=0$; the \Mone{} paired contrast cannot separate $(\alpha,\beta_S)$ values; \Mtwo{} cannot separate evidence use from trend use when incidental evidence covaries with trend. Under the matched-pair assumptions, \Mzero{}+\Mone{} identify $(\alpha,\beta_S,\beta_E)$; \Mtwo{} is not needed for point identification but provides an over-identifying diagnostic for momentum shortcuts under a controlled trend--label mechanism.
\end{proposition}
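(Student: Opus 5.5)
The proof goes one split at a time. For each of the three single-split claims I build two responders of the form \cref{eq:structural} that differ in $(\beta_S,\beta_E)$ but induce the same law of (image, label, response) on that split, so every metric on it takes the same value. I then prove the positive claim by combining \cref{prop:ident}(i) and (ii).

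\textbf{\Mzero{} cannot see $\beta_E$.} In \Mzero{} no evidence is injected, so $E_i=0$ for every image and the term $\beta_E E_i$ drops out of \cref{eq:structural}. Take responders $(\alpha,\beta_S,\beta_E)$ and $(\alpha,\beta_S,0)$ with the same $\gamma$'s and noise law. Their response laws on \Mzero{} coincide pointwise, so OverConf, TBI, the \Mzero{} Brier and log-loss quantities of \cref{thm:brier}, PFI and RFI are all equal.

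\textbf{\Mone{} cannot see $(\alpha,\beta_S)$.} By \cref{prop:ident}(ii), the paired contrast is
\[
\sigma^{-1}(p^{\mathrm{sig}})-\sigma^{-1}(p^{\mathrm{ctr}})=\beta_E\,\Delta E+\Delta\varepsilon ,
\]
and this does not involve $\alpha$, $\beta_S$ or $\gamma$. So the law of the contrast, and with it PSS and the paired sign statistic of \cref{thm:impossibility}(b), is identical for $(\alpha,\beta_S,\beta_E)$ and $(\alpha',\beta_S',\beta_E)$. This is the claim that \Mone{} pairs cannot separate $(\alpha,\beta_S)$.

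\textbf{\Mtwo{} confounds trend and incidental evidence.} Here I reuse the construction behind \cref{thm:impossibility}(a) and make it exact ($\varepsilon=0$): incidental evidence is deterministically coupled to trend in every \Mtwo{} sub-split, $E_i=\operatorname{sign}(\widetilde S_i)$. Compare the momentum-only responder $\sigma^{-1}(p)=\alpha+c\,\operatorname{sign}(\widetilde S)+\varepsilon$ with the evidence-only responder $\sigma^{-1}(p)=\alpha+c\,E+\varepsilon$. Stated literally in the linear-$\widetilde S$ form, I would take $\widetilde S$ two-valued ($\pm1$ after standardization), so that $\beta_S\widetilde S=\beta_S E$. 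Under this coupling the two responders agree on every \Mtwo{} image. The label mechanism depends only on $(S,Y)$, so the aligned, balanced and reverse AUCs, and hence the shortcut gap, are identical even though $(\beta_S,\beta_E)=(c,0)$ versus $(0,c)$.

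\textbf{Positive identification.} \Mzero{} gives $\alpha=\E[\sigma^{-1}(p)]$ and $\beta_S=\operatorname{Cov}(\sigma^{-1}(p),\widetilde S)/\operatorname{Var}(\widetilde S)$ by \cref{prop:ident}(i). This requires $\operatorname{Var}(\widetilde S)>0$, $E=0$ on \Mzero{}, and the prompt and style effects centered, i.e.\ absorbed into $\alpha$ with mean zero over the design. Under the matched-pair assumption (\cref{asm:edits}) with $\Delta E\neq0$ on a positive-mass set, \Mone{} gives
\[
\beta_E=\E\big[\sigma^{-1}(p^{\mathrm{sig}})-\sigma^{-1}(p^{\mathrm{ctr}})\big]/\E[\Delta E] .
\]
The map from the \Mzero{} and \Mone{} laws to $(\alpha,\beta_S,\beta_E)$ is therefore injective, and \Mtwo{} is not needed. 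Over-identification then follows from \cref{prop:ident}(iii): once $\beta_S$ is known from \Mzero{}, the sign of the \Mtwo{} aligned--reverse gap is a testable implication. Under balanced non-trend evidence it must agree with $\operatorname{sign}(\beta_S)$, so disagreement falsifies the model.

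\textbf{Expected obstacle.} The hard part is the \Mtwo{} construction. It must stay inside the responder class of \cref{eq:structural} while keeping $E$ exactly collinear with $\widetilde S$, and that is awkward when $\widetilde S$ is continuous and $E\in\{-1,0,1\}$. I would first try restricting $\widetilde S$ to a symmetric two-point law. If that is considered too degenerate, the fallback is an approximate version: run a TV argument as in \cref{thm:impossibility}(a) and let the metric discrepancy vanish as the coupling $\varepsilon\to0$.
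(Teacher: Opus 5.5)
Your proposal is correct and follows the paper's proof step for step. It uses the same \Mzero{} pair differing only in $\beta_E$ (which drops out because $E\equiv0$), the same cancellation of $\alpha+\beta_S\widetilde S$ in the \Mone{} paired contrast, a trend/incidental-evidence confound on \Mtwo{}, and identification via \cref{prop:ident}(i)--(ii) with \Mtwo{} as a specification test. Your two-point $\widetilde S$ with $E=\operatorname{sign}(\widetilde S)$ makes the \Mtwo{} confound exact inside \cref{eq:structural}, which is tighter than the paper's $\operatorname{corr}(Z_E^{\mathrm{inc}},S)\to1$ limit.

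One small fix concerns dividing by $\E[\Delta E]$, which fails when bullish and bearish pairs balance ($\Delta E=\pm1$ with $\E[\Delta E]=0$). State it conditionally instead, $\beta_E=\E[\sigma^{-1}(p^{\mathrm{sig}})-\sigma^{-1}(p^{\mathrm{ctr}})\mid\Delta E=d]/d$ for any $d\neq0$ of positive mass; this is what your positive-mass condition actually supports. Also, the assumption you need is that pair members share $(S,N,q,s)$, not the CES edit condition \cref{asm:edits}.
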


\Cref{prop:completeness} answers why the benchmark needs several splits rather than one score, without claiming that every split is necessary for the same parameter vector. A single aggregate necessarily discards identified structure; \Mtwo{} instead tests whether the trend coordinate behaves as a shortcut under a flipped mechanism. The classical trend-confounder statement---a score that is a monotone function of past momentum alone has aligned AUC above reverse AUC---appears within this map as \cref{prop:swap} (appendix). Estimating \cref{eq:structural} directly is also informative: a regression of parsed probabilities on momentum, injected-signal direction, and prompt/style dummies, computable from existing logs, summarizes the benchmark in one identified table (\cref{tab:joint-regression}).

\subsection{Evidence faithfulness as a causal estimand}
\label{sec:ces}

Write the model score as $f(I)=g(Z_E,Z_C,Z_R)$, where $Z_E$ are the semantic variables in the evidence region the model itself reports, $Z_C$ a matched control region, and $Z_R$ the remainder. The CES edits realize interventions $I^{E}=r(\dop(Z_E\!\to\!T_E Z_E))$ and $I^{C}=r(\dop(Z_C\!\to\!T_C Z_C))$. The estimand rests on two verifiable conditions (\cref{asm:edits}, appendix): matched magnitude---the rendered displacements $d(I,I^E)$ and $d(I,I^C)$ agree in a fixed visual metric $d$ along the same edit path (the released generator matches glyph-level $\ell_2$ displacement)---and locality---each edit alters only the glyphs rendering its target variables.

\begin{proposition}[CES as a causal contrast]
\label{prop:ces}
Define the evidence-sensitivity gap
\[
\kappa=\E\,\big|g(T_EZ_E,Z_C,Z_R)-g(Z_E,Z_C,Z_R)\big|-\E\,\big|g(Z_E,T_CZ_C,Z_R)-g(Z_E,Z_C,Z_R)\big| .
\]
Under \cref{asm:edits}, $\E[\mathrm{CES}]=\kappa$; hence $\mathrm{CES}>0$ in expectation iff the model's stated evidence is operationally faithful ($\kappa>0$). If exact magnitude matching fails, let $\widetilde I_i^{E}$ denote the evidence edit rescaled along the same edit path to match $d(I_i,I_i^{C})$, and define $\Delta_i^{\mathrm{path}}=d(I_i^{E},\widetilde I_i^{E})$. If $f$ is $L$-Lipschitz with respect to $d$, then $\big|\E[\mathrm{CES}]-\kappa\big|\le L\,\E\Delta^{\mathrm{path}}$.
\end{proposition}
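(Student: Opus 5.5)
The proof has two parts: an exact identity under \cref{asm:edits}, and a Lipschitz perturbation bound when magnitude matching fails.

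\textbf{Exact identity.} By locality, $I^{E}=r(\dop(Z_E\to T_EZ_E))$ changes only the glyphs rendering $Z_E$, and the model score factors through the semantic variables. So we can rewrite the evaluations exactly: $f(I)=g(Z_E,Z_C,Z_R)$, $f(I^{E})=g(T_EZ_E,Z_C,Z_R)$, and $f(I^{C})=g(Z_E,T_CZ_C,Z_R)$. Substituting into the definition of CES gives, pointwise in each image $i$,
\[
\mathrm{CES}_i=\big|g(T_EZ_E,Z_C,Z_R)-g(Z_E,Z_C,Z_R)\big|-\big|g(Z_E,T_CZ_C,Z_R)-g(Z_E,Z_C,Z_R)\big| .
\]
Taking expectations over the benchmark distribution, and using linearity because both terms are bounded in $[0,1]$, yields $\E[\mathrm{CES}]=\kappa$. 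The ``iff'' statement is then immediate, since $\operatorname{sign}\E[\mathrm{CES}]=\operatorname{sign}\kappa$ and operational faithfulness is defined as $\kappa>0$.

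Matched magnitude is what makes $\kappa$ the right estimand rather than a magnitude artefact: it guarantees the two contrasts are taken at equal visual displacement. The step to get right is to treat $\kappa$ as defined \emph{at} the matched edit path. I would therefore state explicitly that under exact matching $T_E$ already denotes the matched edit, so $\widetilde I^{E}=I^{E}$.

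\textbf{Perturbation bound.} When exact matching fails, define the idealized quantity with the rescaled evidence edit,
\[
\widetilde{\mathrm{CES}}_i=|f(I_i)-f(\widetilde I_i^{E})|-|f(I_i)-f(I_i^{C})|,
\]
and take $\kappa$ to be its expectation; this is the matched-magnitude estimand, by the exact-identity argument applied to the rescaled edit. Then
\[
\mathrm{CES}_i-\widetilde{\mathrm{CES}}_i=|f(I_i)-f(I_i^{E})|-|f(I_i)-f(\widetilde I_i^{E})| .
\]
By the reverse triangle inequality this is bounded in absolute value by $|f(I_i^{E})-f(\widetilde I_i^{E})|$, and $L$-Lipschitzness of $f$ with respect to $d$ bounds that by $L\,d(I_i^{E},\widetilde I_i^{E})=L\Delta_i^{\mathrm{path}}$. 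Taking expectations and applying Jensen's inequality ($|\E X|\le\E|X|$) gives $|\E[\mathrm{CES}]-\kappa|\le L\,\E\Delta^{\mathrm{path}}$.

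\textbf{Main obstacle.} The delicate point is interpretive rather than computational. We must be precise that $\kappa$ in the mismatched case refers to the evidence edit rescaled along the same path, so that the control term is shared exactly between CES and $\kappa$ and cancels; only the evidence term then contributes error.
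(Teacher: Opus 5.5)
Your proposal is correct and follows the paper's proof essentially step for step. Locality together with the factorization $f=g(Z_E,Z_C,Z_R)$ gives the pointwise identity for $\mathrm{CES}_i$, and hence $\E[\mathrm{CES}]=\kappa$; the mismatch bound then comes from comparing with the CES computed at the rescaled edit $\widetilde I_i^{E}$, using the reverse triangle inequality and Lipschitz continuity. You also make explicit that $\kappa$ must be read at the matched edit path, so that the control term cancels; the paper adopts the same convention implicitly when it says the rescaled CES ``has expectation $\kappa$ by the first part.''
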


\Cref{prop:ces} separates two interpretations of a near-zero CES: an unfaithful explanation ($\kappa\approx0$) and an underpowered test ($|\kappa|$ below the minimum detectable effect). \Cref{prop:mde} resolves the distinction at the released budget. For open models evaluated on $n=2000$ edit pairs, the MDE is ${\sim}0.004$, so their CES values are measured zeros; for API models evaluated on $n=300$ pairs, the MDE is ${\sim}0.011$, comparable to the observed values.

\subsection{Robustness as group invariance}
\label{sec:invariance}

Let $G=G_r\times G_q$ be the semantic-preserving transformation group: $G_r$ acts on render style (color convention, axes, grid, background, volume panel) and $G_q$ acts on prompt phrasing, with neither changing $(W,Z_E)$. Any responder that is measurable in $(S,Z_E,N)$ alone---including any grounded responder---is constant on $G$-orbits. RFI and PFI are the orbit-variance functionals of $G_r$ and $G_q$ and vanish exactly on $G$-invariant responders; the order-two color subgroup (red-up $\leftrightarrow$ green-up) yields the direct paired test $\E[p\circ c-p]$ for color-heuristic dependence (\cref{prop:invariance}, appendix). This connects the fragility metrics to invariance-based identification \citep{peters2016causal,arjovsky2019invariant}: failures of $G$-invariance certify dependence on presentation rather than candle geometry. The color test is a zero-cost audit on existing logs; all evaluated models pass it (\cref{sec:results}), localizing the observed failure to slope geometry rather than color.

\subsection{Artifact robustness via the generator classifier}
\label{sec:deconfound}

Synthetic benchmarks face a standard objection: a generator classifier distinguishes \Mone{} from \Mzero{} windows with AUC $0.93$ (\cref{tab:audit}), so models might respond to synthetic fingerprints rather than financial semantics. We convert this disclosed risk into a sensitivity estimator. Let $X$ be the background features that make the generator classifiable and $e(X)=\Prb(\Mone\mid X)$ the fingerprint propensity.

\begin{proposition}[Overlap-weighted artifact robustness]
\label{prop:deconfound}
Assume overlap: $e(X)<1$ on a set of positive \Mone{} mass. Weighting \Mone{} pairs by $w(X)=1-e(X)$ identifies the total pair contrast in the overlap population $P_{\mathrm{ow}}(dx)\propto e(x)(1-e(x))P(dx)$, the subpopulation of injected windows whose recorded backgrounds are organic-looking \citep{rosenbaum1983propensity,li2018balancing}. This is not, by itself, a semantic-contrast identification result: if the pair contrast is $D=\zeta(X)+a(X)+\Delta\varepsilon$, the estimand is $\E_{P_{\mathrm{ow}}}[\zeta(X)+a(X)]$. It equals the semantic contrast only under an additional restriction such as pair-invariant artifacts, $\E_{P_{\mathrm{ow}}}a(X)=0$, or an organic-control contrast that subtracts $a(X)$.
\end{proposition}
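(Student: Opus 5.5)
The plan is to treat the weighted \Mone{} pair estimator as a standard overlap-weighting (Li--Morgan--Zaslavsky) estimand and to compute its population limit directly. I first fix the setup. Pooled windows carry a membership indicator $T\in\{0,1\}$, with $T=1$ meaning \Mone{}, and background features $X$; by definition $\Prb(T=1\mid X)=e(X)$. Each \Mone{} pair contributes a contrast $D$, which is observed only when $T=1$. The weighted estimator is
\[
\widehat\theta_{\mathrm{ow}}=\frac{\sum_i T_i\,(1-e(X_i))\,D_i}{\sum_i T_i\,(1-e(X_i))} .
\]
By the law of large numbers its limit is $\theta_{\mathrm{ow}}=\E[T(1-e(X))D]/\E[T(1-e(X))]$. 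The overlap assumption, $e<1$ on a set of positive \Mone{} mass, is exactly what makes the denominator strictly positive, so the ratio is well defined.

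Next I identify the target population by iterated expectations on $X$:
\[
\E[T(1-e(X))D]=\E\big[e(X)(1-e(X))\,\E[D\mid X,T=1]\big],\qquad \E[T(1-e(X))]=\E[e(X)(1-e(X))] .
\]
The ratio is therefore $\E_{P_{\mathrm{ow}}}[\E[D\mid X,T=1]]$, with $P_{\mathrm{ow}}(dx)\propto e(x)(1-e(x))P(dx)$. Equivalently, reweighting the \Mone{} law $P(dx\mid T=1)\propto e(x)P(dx)$ by $1-e(x)$ gives $P_{\mathrm{ow}}$. This is the ``organic-looking'' tilt: the weights down-weight windows whose fingerprint propensity is near one. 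This step gives the first claim, that the estimator identifies the total pair contrast in the overlap population.

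For the second claim I substitute the decomposition $D=\zeta(X)+a(X)+\Delta\varepsilon$. I assume $\E[\Delta\varepsilon\mid X,T=1]=0$, since the pair noise is mean zero as in \cref{eq:structural}. Then $\E[D\mid X,T=1]=\zeta(X)+a(X)$ and $\theta_{\mathrm{ow}}=\E_{P_{\mathrm{ow}}}[\zeta+a]$. This makes the ``not by itself'' statement explicit. To show the claim is sharp, I take two models with the same $\E_{P_{\mathrm{ow}}}[\zeta+a]$ but different splits between $\zeta$ and $a$: for example shift $\zeta\mapsto\zeta+c$ and $a\mapsto a-c$ for a constant $c$. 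Both produce identical weighted contrasts. The sufficient conditions then follow immediately. Pair-invariant artifacts give $a\equiv0$, and $\E_{P_{\mathrm{ow}}}a=0$ kills the second term by linearity. An organic-control contrast subtracts an unbiased estimate of $a(X)$ under the same weights, and linearity again leaves $\E_{P_{\mathrm{ow}}}\zeta$.

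The only delicate step is making precise what ``identifies'' means when $D$ is observed only on \Mone{}. I need $e(X)$ to come from the pooled \Mzero{}/\Mone{} classification problem, so that $e(x)P(dx)$ is the \Mone{} law of $X$. I also need the mean-zero noise condition to hold conditionally on $X$ rather than only marginally. I will state both as standing conventions, and the rest is linear algebra of expectations.
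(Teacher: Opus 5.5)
Your proof is correct and follows the paper's argument: the \Mone{} background law $e(x)P(dx)$ is tilted by $w=1-e$ to give $P_{\mathrm{ow}}\propto e(1-e)P$, and substituting $D=\zeta+a+\Delta\varepsilon$ shows that $a$ survives unless an extra restriction removes it. Your additions go beyond the paper's proof and are sound: the explicit ratio estimator, the observation that overlap is exactly what makes its denominator positive, and the shift $\zeta\mapsto\zeta+c$, $a\mapsto a-c$ showing the claim is sharp. The paper also remarks on consistency when $e$ is replaced by a cross-fitted $\hat e$, which the identification claim itself does not need.
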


The adjusted estimates are recomputable from existing logs at zero API cost (\cref{tab:artifact-adjusted}). Their role is sensitivity analysis: if near-random \Mone{} recovery were driven only by recorded background imbalance, moving to the organic-looking overlap population should materially change the estimates, which \cref{sec:results} tests directly.

\subsection{Statistical toolkit: power, sequential validity, and dependence}
\label{sec:stats}

\textbf{Minimum detectable effects.}
\begin{proposition}[MDEs and sample complexity at the released budget]
\label{prop:mde}
At level $\alpha=0.05$ (two-sided) and power $0.8$: the tie-aware paired sign test on $n$ \Mone{} pairs detects $|\mathrm{PSS}-\tfrac12|\ge 1.40/\sqrt{n}$ ($0.075$ at $n=351$); the quintile contrast detects $\mathrm{TBI}\ge 2.80\,\sigma_p\sqrt{2/(qn)}$ ($0.044$ at $n=800$, $\sigma_p=0.14$); the paired $t$-test detects $|\mathrm{CES}|\ge 2.80\,\sigma_{\mathrm{CES}}/\sqrt{n}$ ($0.011$ at $n=300$, $0.004$ at $n=2000$, $\sigma_{\mathrm{CES}}\approx0.07$). Conversely, any test distinguishing $\beta_E=0$ from $|\beta_E|=\delta$ with error $\le\alpha$ requires $n\ge c\log(1/\alpha)/\delta^{2}$ matched pairs.
\end{proposition}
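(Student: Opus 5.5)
The plan has two parts: a normal-approximation power calculation for each of the three MDE claims, and a two-point Le Cam argument for the converse sample-complexity bound. All three MDE claims come from one template. Let $\hat\theta$ be an asymptotically normal estimator with standard error $\mathrm{se}$. A two-sided level-$0.05$ test then has power $0.8$ against a true effect of size $(z_{0.975}+z_{0.8})\,\mathrm{se}=(1.96+0.84)\,\mathrm{se}=2.80\,\mathrm{se}$. This holds up to the negligible probability that the statistic lands in the wrong-sided rejection region. I would state this once as a lemma, using the CLT for i.i.d.\ pair- or window-level summaries, and then compute $\mathrm{se}$ for each statistic.

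For the tie-aware sign test, each pair contributes a score in $\{0,\tfrac12,1\}$ with mean $\mathrm{PSS}$. Its variance is at most $\tfrac14$, and ties only shrink it. So $\mathrm{se}\le 1/(2\sqrt n)$, which gives the bound $2.80/(2\sqrt n)=1.40/\sqrt n$. Plugging in $n=351$ gives $0.0747\approx0.075$.

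For the quintile contrast, each extreme bucket holds $qn$ windows. The difference of the two bucket means has variance $2\sigma_p^2/(qn)$, assuming independent windows and a common within-bucket standard deviation $\sigma_p$. This gives $2.80\,\sigma_p\sqrt{2/(qn)}$, and at $n=800$, $q=0.2$, $\sigma_p=0.14$ it evaluates to $0.044$. The statistic actually estimates the signed contrast rather than the absolute value in TBI. I would note that under the alternative the sign is fixed, so the absolute value does not change the power calculation.

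For CES, the paired $t$-statistic has $\mathrm{se}=\sigma_{\mathrm{CES}}/\sqrt n$. The $t$-versus-$z$ quantile difference is negligible at $n\ge300$, so the bound is $2.80\cdot0.07/\sqrt n$, giving $0.011$ and $0.0044$.

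For the converse, I would use a two-point Le Cam argument, in the same spirit as \cref{thm:impossibility}(a). Under \cref{eq:structural}, work with the paired link-scale contrasts $D_j=\beta_E\Delta E+\Delta\varepsilon_j$ from \cref{prop:ident}(ii), with Gaussian noise of variance $\sigma^2$ as the hardest member of the model class. Take $P_0$ with $\beta_E=0$ and $P_1$ with $\beta_E=\delta$. Then $\mathrm{KL}(P_1^{n}\,\|\,P_0^{n})=n\,\delta^2(\Delta E)^2/(2\sigma^2)=O(n\delta^2)$. The Bretagnolle--Huber inequality then gives
\[
\max\{\text{type I},\text{type II}\}\ \ge\ \tfrac14\exp\big(-\mathrm{KL}\big).
\]
Requiring this to be at most $\alpha$ forces $n\ge c\log(1/(4\alpha))/\delta^2$. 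That is the claim after adjusting $c$, for $\alpha$ bounded away from $1/4$.

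If one prefers to work only with the sign outcomes, the same argument applies to Bernoulli$(\tfrac12)$ versus Bernoulli$(\tfrac12+c'\delta)$. There the KL divergence is $O(\delta^2)$ per pair, since the link is smooth near $0$.

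The main obstacle is making the converse honest about what "$\delta$" and "$c$" mean. The KL bound needs a noise model with bounded Fisher information, Gaussian or bounded-density noise. I would state that the lower bound holds for this least-favorable subfamily, which suffices for a minimax statement. The constants in the forward MDEs are routine once independence across pairs is assumed. I would flag that dependence between pairs sharing source windows inflates $\mathrm{se}$, and defer that adjustment to the block-aware protocol.
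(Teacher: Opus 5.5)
Your three forward MDE calculations match the paper's proof: the same $(z_{0.975}+z_{0.8})\,\mathrm{se}=2.80\,\mathrm{se}$ template, with $\mathrm{se}\le 1/(2\sqrt n)$ for the tie-aware hits, $\sigma_p\sqrt{2/(qn)}$ for the two quintile means, and $\sigma_{\mathrm{CES}}/\sqrt n$ for the paired CES mean.

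The converse is where you take a different route.

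\textbf{Paper's route.} It reduces to the induced hit sequence. It bounds $\mathrm{KL}\big(\mathrm{Ber}(\tfrac12)^{\otimes n}\,\|\,\mathrm{Ber}(\tfrac12+c'\delta)^{\otimes n}\big)\le 4nc'^2\delta^2/(1-4c'^2\delta^2)$, justifies the reduction by data processing, and then applies Bretagnolle--Huber.

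\textbf{Your route.} You fix Gaussian noise on the link-scale paired contrasts of \cref{prop:ident}(ii) and compute the full-data divergence $n\delta^2(\Delta E)^2/(2\sigma^2)$ directly.

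\textbf{Why yours covers every test.} Data processing gives $\mathrm{KL}(\text{hits})\le\mathrm{KL}(\text{full transcript})$. So a Bernoulli bound on the hits, by itself, only lower-bounds procedures that see the signs, such as the tie-aware sign test. It does not bound an arbitrary test on the recorded probabilities; the paper's ``at least as hard as'' step runs in the wrong direction. An upper bound on the full-data KL for one least-favorable pair inside the model class is exactly what a minimax lower bound over all tests needs.

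\textbf{What each route costs.} Your constant $c$ depends on $\sigma^2/(\Delta E)^2$, and your bound needs a noise family with bounded Fisher information. You flag both points correctly. The caveat is not cosmetic: with noiseless responses a single pair suffices, as \cref{thm:impossibility}(b) notes. The paper's version is phrased in the induced hit-rate margin $c'\delta$, so it is link- and noise-agnostic and matches the statistic actually used, but only for sign-based tests.

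Your secondary remark about working with sign outcomes is the paper's route. You correctly scope it to procedures that observe only the signs.
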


All headline TBI effects exceed their MDEs. The tie-aware PSS estimates also make the M1 failure interpretable as either non-response (high tie rate) or opposite-direction response (low conditional sign accuracy). The only estimates near their MDE are the API-model CES values, which \cref{prop:ces} already treats as power-limited.

\textbf{Anytime-valid sequential evaluation.} API evaluation is metered, and stopping early after an interim look invalidates fixed-$n$ inference. For tie-aware pair scores $H_k\in\{0,\tfrac12,1\}$ with null $\E[H_k\mid\mathcal F_{k-1}]=\tfrac12$, the betting process $E_n=\prod_{k\le n}(1+\lambda_k(H_k-\tfrac12))$ with predictable $\lambda_k\in(-2,2)$ is a nonnegative supermartingale, so $\Prb(\exists n: E_n\ge1/\alpha)\le\alpha$ by Ville's inequality \citep{ville1939etude,ramdas2023game,grunwald2024safe,waudbysmith2024betting}. Under within-episode dependence, validity comes from betting on episode-level scores or block-level e-values, not from randomizing pair order; randomization only avoids replay-order artifacts (\cref{app:evalues}). Each metric is also a plug-in functional of paired contrasts whose influence functions are asymptotically normal under episode-block dependence (\cref{app:asymptotics}), giving the released block-bootstrap and FDR protocol \citep{kunsch1989block,benjamini1995fdr} an estimation-theoretic basis.

\section{Experiments}

\textbf{Data and rendering.} The benchmark uses public OHLCV sources and benchmark seeds: StockNet/ACL18-style equities \citep{xu2018stocknet}, ACL18/KDD17 splits \citep{feng2019advalstm}, public daily equity/ETF data, crypto klines, and FI-2010 for optional high-frequency aggregation. The main setting uses $L=60$, $H=5$ (robustness: $L\in\{30,60,120\}$, $H\in\{1,5,10,20\}$), with render variants for color convention, axes, volume, moving averages, background, and aspect ratio.

\textbf{Models.} We report completed frozen VLM evaluations for commercial API models and open-weight VLMs. Commercial endpoints are treated as black boxes; released manifests record the configured API model identifier and evaluation window for each run. Runs still in progress at draft time are excluded from claims.

\textbf{Evaluation.} The main API-budget split evaluates 800 \Mzero, 800 \Mone, 400 \Mtwo, and 400 \Mthree{} samples per completed model; prompt and renderer subsets use paired designs. Confidence intervals use a block bootstrap over asset/date episodes with FDR control across comparisons \citep{kunsch1989block,benjamini1995fdr}, justified asymptotically in \cref{app:asymptotics}. Traditional baselines (\cref{tab:rules}) include random prediction, abstention, momentum continuation, mean reversion, moving-average crossover, RSI-style reversal, breakout rules, and a LightGBM technical-indicator diagnostic. The last is not a proposed model; it checks that the benchmark is solvable from explicit generator features.

\textbf{Interpretation protocol.} Two conservative rules govern conclusions. First, we make claims only for completed model--split pairs. Second, a model has not recovered a candlestick signal unless it beats both chance and the corresponding trend baseline, because many injected patterns sit at trend ends.

\begin{table}[t]
\centering
\small
\setlength{\tabcolsep}{4.2pt}
\caption{Main frozen-model results on Martingale Doppelg\"anger-Eval. M0 is null-market calibration; M1 tests injected-signal recovery; M2 tests trend-confounder shortcuts. PSS is tie-aware, scoring exact ties as $1/2$; lower is better for OverConf, TBI, and Gap; higher is better for M1 AUC/PSS and CES.}
\label{tab:main-results}
\begin{tabular*}{\linewidth}{@{\extracolsep{\fill}}lrrrrrrr@{}}
\toprule
Model & OverConf$\downarrow$ & TBI$\downarrow$ & $\rho(p,M)$ & M1 AUC$\uparrow$ & PSS$\uparrow$ & M2 Gap$\downarrow$ & CES$\uparrow$ \\
\midrule
GPT-5.3 & 0.101 & 0.133 & 0.426 & 0.512 & 0.430 & 0.503 & -0.003 \\
GPT-4.1 & 0.103 & 0.174 & 0.467 & 0.517 & 0.466 & 0.455 & 0.007 \\
Claude Sonnet API & 0.161 & 0.166 & 0.399 & 0.488 & 0.444 & 0.448 & 0.014 \\
GPT-4o & 0.088 & 0.145 & 0.395 & 0.486 & 0.453 & 0.410 & 0.013 \\
Qwen2.5-VL-7B & 0.224 & 0.322 & 0.549 & 0.486 & 0.504 & 0.520 & -0.018 \\
LLaVA-OV-7B & 0.036 & 0.113 & 0.619 & 0.469 & 0.507 & 0.399 & -0.004 \\
InternVL3-8B & 0.192 & 0.300 & 0.586 & 0.478 & 0.499 & 0.591 & -0.003 \\
\bottomrule
\end{tabular*}
\end{table}

\begin{figure}[H]
  \centering
  \includegraphics[width=0.86\linewidth]{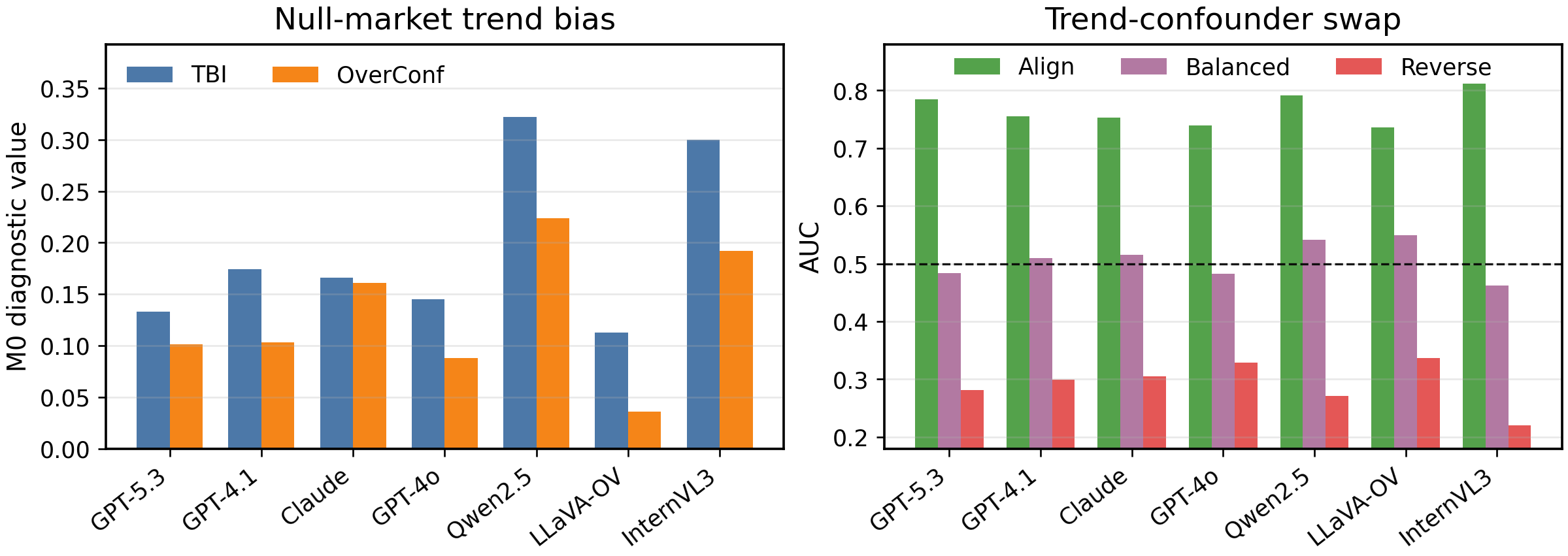}
  \caption{Core diagnostics. Left: M0 trend-bias index and null overconfidence. Right: M2 AUC under aligned, balanced, and reverse trend-label relations (dashed line: chance). Large aligned--reverse separation indicates trend-following shortcuts.}
  \label{fig:core}
  \vspace{0.5em}
\end{figure}

\section{Results}
\label{sec:results}

\textbf{One identified table: trend dominates evidence for every model.} \Cref{tab:joint-regression} estimates \cref{eq:structural} directly on all parsed \Mzero/\Mone{} responses from existing logs. Every frozen model has a strongly positive trend coefficient ($\beta_S$ from $0.20$ to $0.41$ on the log-odds scale, all $|t|>8$). By contrast, $\beta_E$ is either indistinguishable from zero (open models) or significantly negative (all four commercial APIs; e.g.\ Claude Sonnet API $-0.17$, SE $0.05$). The negative sign is diagnostic rather than incidental: the dominant injected rule family is the low-volume failed breakout, where surface motion is upward but rule semantics are bearish, and pooled PSS on that family is $0.12$. The models track the surface direction of the final move rather than the rule semantics; the shortcut operates even at the pattern level.

\begin{table}[t]
\centering
\small
\setlength{\tabcolsep}{4.5pt}
\caption{Structural behavioral regression estimated from existing logs: $\mathrm{logit}(\probup)=\alpha+\beta_S S+\beta_E E+\gamma^\top(\text{style},\text{prompt})$ on all parsed \Mzero{}/\Mone{} responses, with $S$ standardized past momentum and $E\in\{-1,0,+1\}$ the injected-signal direction. Standard errors are clustered by base window. Every frozen model loads heavily on the trend shortcut ($\beta_S$) and only weakly, if at all, on injected candlestick evidence ($\beta_E$); the trained positive control reverses this ordering.}
\label{tab:joint-regression}
\begin{tabular*}{\linewidth}{@{\extracolsep{\fill}}lrrrrr@{}}
\toprule
Model & $n$ & $\beta_S$ (SE) & $\beta_E$ (SE) & $|\beta_S/\beta_E|$ & $\max|\gamma_{\text{prompt}}|$ \\
\midrule
GPT-5.3 & 6237 & 0.200 (0.021) & -0.077 (0.028) & 2.6 & 0.097 \\
GPT-4.1 & 6700 & 0.268 (0.027) & -0.077 (0.030) & 3.5 & 0.121 \\
Claude Sonnet API & 4550 & 0.372 (0.039) & -0.171 (0.046) & 2.2 & 0.187 \\
GPT-4o & 6700 & 0.216 (0.020) & -0.064 (0.028) & 3.4 & 0.159 \\
Qwen2.5-VL-7B & 6693 & 0.409 (0.037) & 0.008 (0.034) & 49.9 & 0.916 \\
LLaVA-OV-7B & 6691 & 0.208 (0.025) & -0.020 (0.015) & 10.5 & 0.186 \\
InternVL3-8B & 6694 & 0.365 (0.035) & 0.042 (0.027) & 8.8 & 1.065 \\
\bottomrule
\end{tabular*}
\end{table}

\textbf{Frozen VLMs are trend-sensitive in the null market.} \Cref{tab:main-results,fig:core} show that completed frozen VLMs are not calibrated to the martingale-null mechanism: M0 TBI ranges from $0.113$ to $0.322$, and the Spearman correlation between $\probup$ and past momentum is positive for every model. By \cref{cor:m0-null}, this dependence is a model property rather than alpha; by \cref{thm:brier}, it is costly, forcing excess Brier risk up to $0.050$; and by \cref{prop:mde}, all TBI estimates exceed the $0.044$ MDE.

\textbf{Injected signals are not recovered---and the failure has two distinct forms.} M1 AUC is near random for all frozen VLMs ($0.469$--$0.517$), and tie-aware PSS never materially exceeds chance. The matched-pair decomposition separates two mechanisms: open models mostly do not respond (86--96\% of pairs yield numerically identical probabilities), while commercial APIs respond more often (tie rates 15--67\%) but move opposite to the rule-implied direction conditional on responding (sign accuracy $0.36$--$0.46$), consistent with the negative $\beta_E$. Both modes are far from the trained positive controls, which reach PSS $0.91$ on the same pairs (\cref{tab:dcc-controls}). The benchmark therefore rewards grounding when it exists rather than merely penalizing all models.

\textbf{The recorded-artifact objection weakens under overlap adjustment.} Generator distinguishability (AUC $0.93$, \cref{tab:audit}) raises the possibility that models respond to synthetic fingerprints. Applying \cref{prop:deconfound}, \cref{tab:artifact-adjusted} reweights pairs toward organic-looking recorded backgrounds. Recovery remains weak, so the finding is stable in the overlap population, although this analysis does not remove unrecorded artifacts.

\begin{table}[t]
\centering
\small
\setlength{\tabcolsep}{4pt}
\caption{Overlap-weighted \Mone{} robustness. Weights $w=1-\hat e$ target injected pairs with organic-looking recorded backgrounds (propensity AUC 0.892). PSS is tie-aware; ESS is Kish effective sample size; brackets are pair-bootstrap 95\% intervals.}
\label{tab:artifact-adjusted}
\begin{tabular*}{\linewidth}{@{\extracolsep{\fill}}lrrrrrr@{}}
\toprule
Model & Pairs & ESS & PSS & PSS$^{\mathrm{adj}}$ & AUC & AUC$^{\mathrm{adj}}$ \\
\midrule
GPT-5.3 & 351 & 208 & 0.430 & 0.405 [0.36, 0.46] & 0.512 & 0.316 [0.22, 0.43] \\
GPT-4.1 & 351 & 208 & 0.466 & 0.454 [0.39, 0.51] & 0.517 & 0.323 [0.23, 0.44] \\
Claude Sonnet API & 351 & 208 & 0.444 & 0.427 [0.39, 0.47] & 0.488 & 0.308 [0.23, 0.42] \\
GPT-4o & 351 & 208 & 0.453 & 0.431 [0.40, 0.46] & 0.486 & 0.324 [0.24, 0.43] \\
Qwen2.5-VL-7B & 347 & 206 & 0.504 & 0.508 [0.49, 0.53] & 0.482 & 0.313 [0.24, 0.41] \\
LLaVA-OV-7B & 351 & 208 & 0.507 & 0.509 [0.49, 0.53] & 0.469 & 0.343 [0.26, 0.44] \\
InternVL3-8B & 351 & 208 & 0.499 & 0.481 [0.45, 0.51] & 0.478 & 0.330 [0.25, 0.43] \\
\bottomrule
\end{tabular*}
\end{table}

\textbf{Trend-confounder swaps expose shortcut use---in both directions.} M2 shortcut gaps range from $0.400$ to $0.591$ (open VLMs) and $0.410$ to $0.503$ (commercial APIs); the right panel of \cref{fig:core} shows the pattern predicted by \cref{prop:swap}. The fully fine-tuned positive control DCC-VLM-Full has an M2 gap of $-0.465$ (\cref{tab:dcc-controls}): aggressive optimization for injected evidence produced a reverse shortcut. The calibrated control DCC-VLM-Full-Cal approaches the desirable corner (gap $-0.019$, PSS $0.83$).

\begin{figure}[H]
  \centering
  \includegraphics[width=0.72\linewidth]{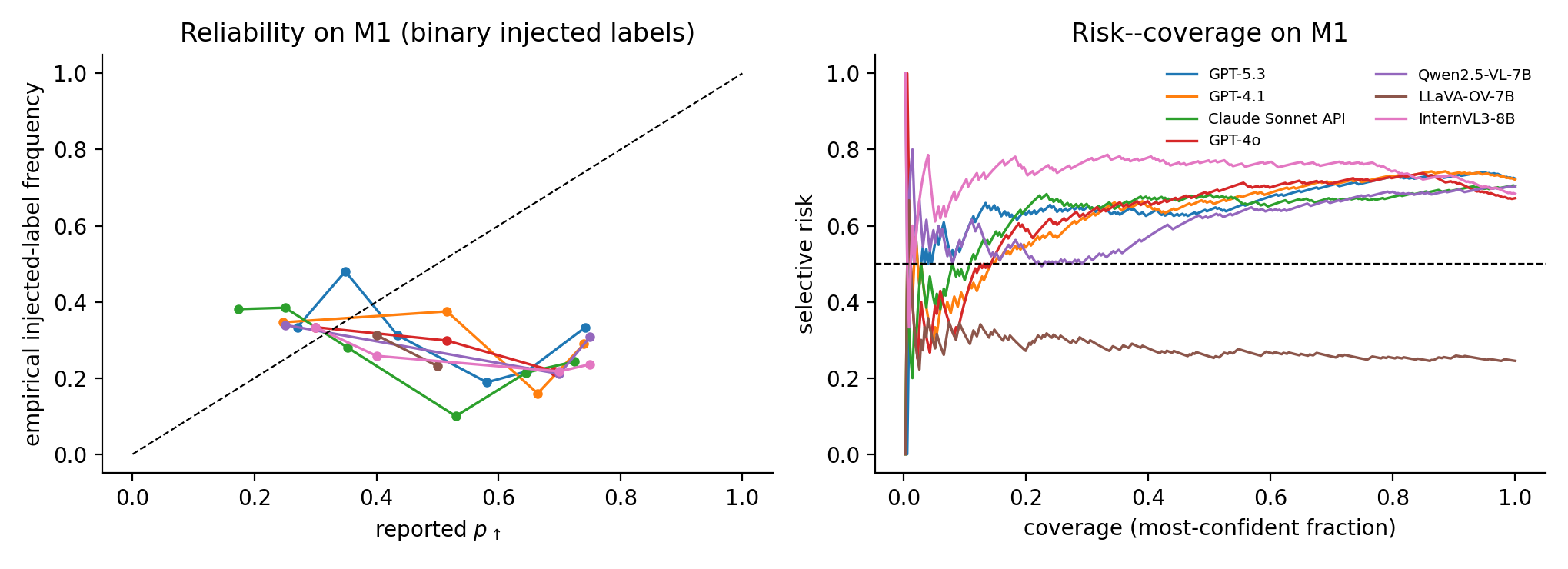}
  \caption{Reliability and selective risk on \Mone{} pairs; confidence is often anti-informative.}
  \label{fig:reliability}
  \vspace{-0.5em}
\end{figure}

\textbf{Evidence faithfulness is weak.} \Cref{fig:reliability} shows flat-to-inverted reliability curves, and CES values are close to zero, with several open models negative. By \cref{prop:ces,prop:mde}, open-model estimates are measured zeros ($n=2000$, MDE $0.004$), while API estimates cannot rule out small positive faithfulness ($n=300$, MDE $0.011$).

\textbf{Models pass the color-invariance audit.} The color-subgroup test of \cref{prop:invariance} gives mean paired shifts $|\E[p\circ c-p]|\le0.005$ (all Wilcoxon $p\ge0.12$), ruling out a red/green heuristic and localizing the failure to slope geometry.

\section{Discussion and Limitations}

\textbf{Scope of the audit template.} The benchmark instantiates a general recipe---martingale-null labels, injected counterfactual evidence, and confounder swaps---for time-series imagery. The identification analysis uses no financial structure beyond a trend statistic coupled to local evidence, so the same logic applies to waveforms, weather charts, and sensor dashboards.

\textbf{Limitations.} The benchmark does not claim that candlestick patterns are profitable or tradable. \Cref{prop:deconfound} is an overlap-population robustness check for recorded artifacts, not a removal of unrecorded or pair-differential fingerprints. The structural model in \cref{eq:structural} omits interactions, and commercial API results identify black-box systems only by manifest strings and evaluation windows. The benchmark audits single-chart understanding rather than portfolios, execution, news, or cross-asset context.

\section{Conclusion}

We presented Martingale Doppelg\"anger-Eval, a shadow-market benchmark showing that observational chart evaluation cannot certify evidence use, while matched interventions and trend--label swaps can audit evidence sensitivity and momentum shortcuts. Frozen VLMs are trend-biased in null markets, weak on injected candlestick evidence, vulnerable to trend-confounder swaps, and only weakly faithful to stated evidence. Each conclusion is tied to an identified coordinate, a released power calculation, and explicit artifact-robustness assumptions.

\label{page:lastmain}
\clearpage
\bibliographystyle{plainnat}
\bibliography{references}

\clearpage
\appendix
\section{Proofs and Extended Theoretical Analysis}
\label{app:proofs}

\subsection{Notation and Scoring Conventions}

For a base window $W_i$, let $I_i=r(W_i,s_i)$ denote the rendered image under
style $s_i$, and let $p_i=f(I_i,q_i)$ be the model's reported probability of an
up move under prompt $q_i$. The benchmark evaluates a forecast using a loss
$\ell(p,y)$ for $p\in[0,1]$ and $y\in\{0,1\}$. A scoring rule is strictly proper
if, for any event probability $\pi\in[0,1]$, the expected score
\[
  R_\pi(p)=\pi \ell(p,1)+(1-\pi)\ell(p,0)
\]
is uniquely minimized at $p=\pi$. Log loss and the Brier score are the two
canonical examples used in our analysis and implementation.

The abstention option is treated as a separate action. If abstention receives a
fixed neutral score equal to the loss of reporting $p=1/2$, both abstention and
$p=1/2$ are Bayes optimal in \Mzero-A. If abstention is penalized more heavily,
the unique probabilistic optimum remains $p=1/2$. Thus the null-market results
should be read as statements about the probability forecast: directional
confidence is not justified by the benchmark mechanism.

\subsection{Proof of \texorpdfstring{\Cref{thm:impossibility}}{Theorem 1} (Observational Non-Identifiability)}
\label{app:impossibility-proof}

\begin{proof}
\textbf{Construction.} Fix $\varepsilon\in(0,1)$, a threshold $\tau$, and two
probability values $0<p_0<p_1<1$ with margin $\delta_0=(p_1-p_0)/2$. Define the
observational market $\Dobs^{\varepsilon}$ as follows. The trend statistic $S$
has a continuous distribution; the binary evidence variable is
\[
  Z_E \;=\; \ind\{S>\tau\}\oplus B,\qquad B\sim\mathrm{Bernoulli}(\varepsilon)
  \ \text{independent of } (S,N,Y),
\]
where $\oplus$ is exclusive-or; nuisance $N$ and the label $Y$ have arbitrary
joint distribution with $(S,N)$. Thus with probability $1-\varepsilon$ the
evidence indicator agrees exactly with the trend-end indicator---the
``breakouts happen at trend ends'' regime that real markets approximate.
The renderer $r$ is injective in $(S,Z_E,N)$ on its support, so both quantities
below are well-defined functions of the image. Define
\[
  f_s(I)=p_0+(p_1-p_0)\,\ind\{S(I)>\tau\},\qquad
  f_g(I)=p_0+(p_1-p_0)\,Z_E(I).
\]
$f_s$ is a trend-shortcut responder and $f_g$ is a grounded responder with
response margin $p_1-p_0=2\delta_0$.

\textbf{(a) Impossibility.} Under $\Dobs^{\varepsilon}$,
$\Prb(f_g(I)\neq f_s(I))=\Prb(B=1)=\varepsilon$. An observational transcript of
length $m$ is $T_m(f)=\{(I_i,Y_i,f(I_i))\}_{i\le m}$ with the $(I_i,Y_i)$ drawn
i.i.d.\ from $\Dobs^{\varepsilon}$. Couple the two transcripts by using the same
draws $(I_i,Y_i)$: the transcripts differ only on the event
$\bigcup_{i\le m}\{B_i=1\}$, whose probability is at most $m\varepsilon$ by the
union bound. Hence
\[
  \tv\big(\mathcal{L}(T_m(f_g)),\,\mathcal{L}(T_m(f_s))\big)
  \le \min\{1,m\varepsilon\}.
\]
By the Neyman--Pearson/Le Cam two-point bound \citep{tsybakov2009introduction},
any decision rule $\psi$ mapping transcripts to $\{\text{grounded},
\text{shortcut}\}$ satisfies
\[
  \max\Big\{\Prb_{f_g}(\psi\neq\text{grounded}),\,
            \Prb_{f_s}(\psi\neq\text{shortcut})\Big\}
  \;\ge\;\frac{1-\tv\big(\mathcal{L}(T_m(f_g)),\mathcal{L}(T_m(f_s))\big)}{2}
  \;\ge\;\frac{(1-m\varepsilon)_+}{2}.
\]
For any fixed budget $m$, letting $\varepsilon\to0$ drives the bound to $1/2$:
no observational audit beats a coin flip in the strong-coupling limit. Note
that the bound applies to any functional of the transcript---accuracy,
AUC, calibration error, explanation scores---because all are measurable
functions of $T_m$.

\textbf{(b) Identifiability under intervention.} The \Mone{} generator realizes
$\dop(Z_E=1)$ and $\dop(Z_E=0)$ at shared $(S,N,s)$: for pair $k$ it renders
$I_k^{\mathrm{sig}}$ and $I_k^{\mathrm{ctr}}$ differing only in $Z_E$. By
construction $f_s(I_k^{\mathrm{sig}})=f_s(I_k^{\mathrm{ctr}})$ for every pair
(the trend statistic is shared), while
$f_g(I_k^{\mathrm{sig}})-f_g(I_k^{\mathrm{ctr}})=2\delta_0>0$ for every pair. In
the noiseless case a single pair separates the classes. If responses are
observed with independent bounded noise, let
$H_k=\ind\{p(I_k^{\mathrm{sig}})>p(I_k^{\mathrm{ctr}})\}$; under $f_s$ the
$H_k$ are Bernoulli$(1/2)$ (exchangeable noise), while under $f_g$ they are
Bernoulli$(1/2+\delta)$ with $\delta>0$ determined by the noise level and
margin $\delta_0$. The test that declares ``grounded'' when
$\bar H_n\ge 1/2+\delta/2$ has, by Hoeffding's inequality, error at most
$\exp(-n\delta^2/2)$ under either hypothesis. The \Mtwo{} argument is
analogous for the narrower claim used in the main text: when non-trend
evidence is balanced across labels, a momentum-only score has population AUC
$>1/2$ on the aligned split and $<1/2$ on the reverse split
(\cref{prop:swap}), while an $S$-invariant score has AUC $1/2$ on both. The
empirical AUC concentrates at rate $\exp(-cn t^2)$ by a U-statistic Hoeffding
bound. Without the evidence-balancing condition, \Mtwo{} is a shortcut stress
test rather than a standalone proof that a model is ungrounded, because
incidental evidence can covary with trend.
\end{proof}

\begin{remark}
The construction uses a deterministic-plus-noise coupling for clarity, but the
argument only needs the observational coupling between $Z_E$ and any
trend-measurable event to be strong; $\varepsilon$ measures the decoupling mass
that observational data offer for free. The benchmark interventions manufacture
decoupling ($\varepsilon=1/2$ in the paired design) rather than waiting for it.
\end{remark}

\subsection{Proof of \texorpdfstring{\Cref{cor:m0-null}}{the Null-Optimality Corollary}}

\begin{proof}
Fix an image $I_i$. In \Mzero-A, the evaluator creates two equally weighted
evaluation items from the same image: $(I_i,1)$ and $(I_i,0)$. The conditional
probability under the benchmark experiment is therefore
$\Prb_{\Mzero\text{-A}}(Y=1\mid I_i)=1/2$. For any strictly proper binary
scoring rule, the conditional risk
\[
  R_i(p)=\tfrac{1}{2}\ell(p,1)+\tfrac{1}{2}\ell(p,0)
\]
is uniquely minimized by reporting the true conditional probability $p=1/2$.
For concreteness: under log loss $R_i'(p)=-\frac{1}{2p}+\frac{1}{2(1-p)}$ with
$R_i''>0$, and under the Brier score $R_i'(p)=2p-1$; both give the unique
minimizer $p=1/2$.

Let $A(I_i)$ be any measurable chart feature, such as past momentum,
volatility, or renderer style. Because the paired construction assigns equal up
and down labels to every individual image, conditioning cannot change the event
probability: $\Prb_{\Mzero\text{-A}}(Y=1\mid A(I_i))=1/2$. Therefore any
systematic variation of $p_i$ with $A(I_i)$ is a property of the model
response, not evidence of benchmark alpha.
\end{proof}

\subsection{Null Extensions for \texorpdfstring{\Mzero-B and \Mzero-C}{M0-B and M0-C}}

\begin{proposition}[Zero-drift stochastic-volatility null]
\label{prop:m0b}
Suppose the shadow future follows
$r_{t+k}=\sigma_{t+k}\epsilon_{t+k}$ with
joint sign symmetry of the future innovations:
\[
(\epsilon_{t+1},\ldots,\epsilon_{t+H})
\mid \mathcal F_t,(\sigma_{t+1},\ldots,\sigma_{t+H})
\overset{d}{=}
-(\epsilon_{t+1},\ldots,\epsilon_{t+H})
\mid \mathcal F_t,(\sigma_{t+1},\ldots,\sigma_{t+H}).
\]
Then $\Prb(R_{t,H}>0\mid\mathcal{F}_t)=1/2$ for any horizon return
$R_{t,H}=\sum_{k=1}^H r_{t+k}$ whenever the conditional distribution of
$R_{t,H}$ is continuous. The condition is satisfied, for example, by
conditionally independent continuous innovations that are each symmetric
around zero.
\end{proposition}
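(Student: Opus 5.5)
The argument is a reflection (sign-flip) symmetry. Let $\boldsymbol\epsilon=(\epsilon_{t+1},\ldots,\epsilon_{t+H})$ and $\boldsymbol\sigma=(\sigma_{t+1},\ldots,\sigma_{t+H})$. The horizon return is $R_{t,H}=\sum_{k=1}^H\sigma_{t+k}\epsilon_{t+k}=\phi(\boldsymbol\sigma,\boldsymbol\epsilon)$ for the bilinear map $\phi(\boldsymbol\sigma,\boldsymbol\epsilon)=\langle\boldsymbol\sigma,\boldsymbol\epsilon\rangle$. This map is odd in its second argument: $\phi(\boldsymbol\sigma,-\boldsymbol\epsilon)=-\phi(\boldsymbol\sigma,\boldsymbol\epsilon)$.

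First I will condition on $\mathcal F_t$ and $\boldsymbol\sigma$ jointly. Let $\mathcal G=\mathcal F_t\vee\sigma(\boldsymbol\sigma)$. The joint sign-symmetry hypothesis says the regular conditional law of $\boldsymbol\epsilon$ given $\mathcal G$ is invariant under $\boldsymbol\epsilon\mapsto-\boldsymbol\epsilon$. Since $\boldsymbol\sigma$ is $\mathcal G$-measurable, it is a constant under this conditional law, so $R_{t,H}\mid\mathcal G\overset{d}{=}-R_{t,H}\mid\mathcal G$. Hence $\Prb(R_{t,H}>0\mid\mathcal G)=\Prb(R_{t,H}<0\mid\mathcal G)$.

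Second, I will integrate out $\boldsymbol\sigma$ with the tower property. This gives $\Prb(R_{t,H}>0\mid\mathcal F_t)=\Prb(R_{t,H}<0\mid\mathcal F_t)$ almost surely. The two probabilities together with $\Prb(R_{t,H}=0\mid\mathcal F_t)$ sum to one. Continuity of the conditional distribution of $R_{t,H}$ given $\mathcal F_t$ gives $\Prb(R_{t,H}=0\mid\mathcal F_t)=0$, so each of the two remaining probabilities equals $1/2$.

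The main obstacle is measure-theoretic bookkeeping rather than anything substantive. The hypothesis is stated as an equality of conditional laws, so I will work with regular conditional distributions on $\mathbb R^H$, which exist because the space is Polish. I will apply the pushforward under $\phi(\boldsymbol\sigma,\cdot)$ pointwise in the conditioning variable and check that the resulting identities hold almost surely. The continuity assumption is imposed given $\mathcal F_t$, which is exactly the level needed in the last step. If it were only available given $\mathcal G$, the same argument would give $1/2$ at the $\mathcal G$ level, and averaging would preserve it.

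For the sufficient condition, I will take conditionally independent continuous innovations, each symmetric about zero given $\mathcal F_t$ and $\boldsymbol\sigma$. The conditional law is then a product of symmetric marginals, so it is invariant under the global sign flip, which verifies joint sign symmetry. With $\boldsymbol\sigma\neq0$, $R_{t,H}$ is a nontrivial linear combination of independent continuous variables, so it is continuous conditional on $\mathcal G$ and hence given $\mathcal F_t$. If all $\sigma_{t+k}=0$ occurs with positive probability, the continuity hypothesis already excludes it, and I will note this explicitly.
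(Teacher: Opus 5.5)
Your proposal is correct and follows the paper's own argument: condition on the volatility path so that joint sign symmetry gives $R_{t,H}\overset{d}{=}-R_{t,H}$, integrate the volatility path out via the tower property, and use continuity to rule out an atom at zero. Your additional check of the sufficient condition, including conditional continuity when the volatility path is nonzero, makes explicit a step the paper only asserts.
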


\begin{proof}
Conditional on the volatility path, joint sign symmetry implies
$R_{t,H}\overset{d}{=}-R_{t,H}$. Integrating over the volatility path preserves
symmetry. Continuity rules out mass at zero, giving equal probability to
positive and negative returns.
\end{proof}

\begin{proposition}[Within-bucket balancing]
\label{prop:m0c}
Let $B_i$ be a momentum bucket. If the benchmark samples labels so that
$\Prb(Y=1\mid B_i=b)=1/2$ for every bucket $b$, then a predictor that is a
function only of the bucket has no directional information under a strictly
proper score and is optimized by $p=1/2$ within every bucket.
\end{proposition}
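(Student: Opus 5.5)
The plan is to reduce the statement to the pointwise argument already used in \cref{cor:m0-null}, now applied at the level of a bucket rather than a single image. Let $p=\phi(B_i)$ with $\phi$ measurable; this is what ``a function only of the bucket'' means. Fix a strictly proper loss $\ell$. By the tower property, the expected score decomposes as
\[
\E\,\ell(\phi(B),Y)=\sum_b \Prb(B=b)\,\E\big[\ell(\phi(b),Y)\mid B=b\big].
\]
I would write this as an integral if the buckets are not finite; for quintile buckets it is a finite sum. So the predictor's risk separates across buckets, and minimizing over $\phi$ amounts to minimizing each summand independently over the single number $\phi(b)\in[0,1]$.

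Within bucket $b$ the label is Bernoulli with success probability $\pi_b=\Prb(Y=1\mid B=b)=1/2$ by the balancing hypothesis. Since $\phi(b)$ is constant on the bucket, the conditional risk is exactly $R_{1/2}(\phi(b))=\tfrac12\ell(\phi(b),1)+\tfrac12\ell(\phi(b),0)$. By strict propriety, as defined in the notation subsection, this is uniquely minimized at $\phi(b)=1/2$. Hence the constant $1/2$ is optimal within every bucket, and it is uniquely optimal for every bucket of positive mass.

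For the ``no directional information'' clause, I would note two things. First, $\Prb(Y=1\mid B)=1/2$ almost surely, so $Y$ and $B$ are independent: $Y$ is binary and its conditional law does not depend on $b$. Second, any bucket-measurable report that departs from $1/2$ on a positive-mass bucket strictly increases risk over the constant $1/2$. I would quantify this as $\E[(p-\tfrac12)^2]$ under the Brier score, mirroring \cref{thm:brier}.

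There is no real obstacle. The only care needed is measurability and null buckets: uniqueness holds only up to $\Prb(B=b)=0$ sets. I would state the result as ``optimal everywhere, unique almost surely in $B$.''
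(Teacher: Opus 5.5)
Your proposal is correct and matches the paper's argument: within each bucket the label has event probability $1/2$, so strict propriety gives the unique optimum $p=1/2$ for any bucket-measurable predictor. Your tower-property decomposition and the caveat that uniqueness holds only up to null buckets make explicit what the paper's two-line proof leaves implicit.
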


\begin{proof}
The conditional risk given $B_i=b$ has event probability $1/2$. Strict
properness again gives the unique optimum $p=1/2$ for any predictor measurable
with respect to $B_i$.
\end{proof}

\subsection{Proof of \texorpdfstring{\Cref{thm:brier}}{the Calibration--Decision Link}}

\begin{proof}
On \Mzero-A each image is scored against both labels with equal weight, so the
realized Brier risk of reporting $p$ is
$\tfrac12(p-1)^2+\tfrac12 p^2=\tfrac14+(p-\tfrac12)^2$, and the excess over the
optimal report $1/2$ is exactly $\E[(p-\tfrac12)^2]$.

\textbf{OverConf bound.} By Jensen's inequality applied to $x\mapsto x^2$,
$\E[(p-\tfrac12)^2]\ge\big(\E|p-\tfrac12|\big)^2=\mathrm{OverConf}^2$.

\textbf{TBI bound.} Let $Q_{\mathrm{top}},Q_{\mathrm{bottom}}$ be the extreme
momentum quintiles, each of probability mass $q=0.2$, and write
$a=\E[p\mid Q_{\mathrm{top}}]-\tfrac12$,
$b=\E[p\mid Q_{\mathrm{bottom}}]-\tfrac12$, so $\mathrm{TBI}=|a-b|$. Then
\begin{align*}
\E[(p-\tfrac12)^2]
&\ge q\,\E[(p-\tfrac12)^2\mid Q_{\mathrm{top}}]
  + q\,\E[(p-\tfrac12)^2\mid Q_{\mathrm{bottom}}]\\
&\ge q\,a^2+q\,b^2
\;\ge\; \frac{q}{2}\,(a-b)^2 \;=\;\frac{q}{2}\,\mathrm{TBI}^2 ,
\end{align*}
using within-bucket Jensen for the second inequality and
$a^2+b^2\ge(a-b)^2/2$ for the third. Combining the two bounds gives the
maximum.

\textbf{Log loss.} The excess log-loss risk of reporting $p$ on the paired item
is $\mathrm{KL}\big(\mathrm{Ber}(\tfrac12)\,\|\,\mathrm{Ber}(p)\big)$, and by
Pinsker's inequality
$\mathrm{KL}\ge 2(p-\tfrac12)^2$ pointwise; taking expectations and repeating
the two arguments above yields the stated factor-two versions.
\end{proof}

\subsection{Proof of \texorpdfstring{\Cref{prop:ident}}{the Identification Map}}

\begin{proof}
Work on the link scale $u_i=\sigma^{-1}(p_i)$ under \cref{eq:structural}, with
the default prompt/style $(q_0,s_0)$ normalized to $\gamma_{q_0}=\gamma_{s_0}=0$.

(i) On \Mzero{}, $E_i\equiv0$, so $u_i=\alpha+\beta_S\widetilde S_i+\varepsilon_i$
with $\E\widetilde S=0$ and $\varepsilon\perp\widetilde S$. Taking expectations
gives $\alpha=\E u$; taking covariances gives
$\beta_S=\operatorname{Cov}(u,\widetilde S)/\operatorname{Var}(\widetilde S)$.
The link-scale quintile contrast
\[
  \left|\E[u\mid Q_{\mathrm{top}}]-\E[u\mid Q_{\mathrm{bottom}}]\right|
  =
  |\beta_S|\,
  \left|\E[\widetilde S\mid Q_{\mathrm{top}}]
       -\E[\widetilde S\mid Q_{\mathrm{bottom}}]\right|
\]
therefore identifies $|\beta_S|$ whenever the extreme buckets have distinct
mean momentum. The probability-scale TBI reported in the tables is a bounded
diagnostic proxy. Under sign-separated buckets, independent noise, and no
saturation reversal, it has the sign of $\beta_S$ locally around zero; we do
not require or claim global strict monotonicity on the probability scale. For
OverConf, the valid distribution-free Jensen bound is
\[
  \E|p-\tfrac12|\ge |\E p-\tfrac12|.
\]
No lower bound in terms of the link-scale intercept $\alpha$ is used.

(ii) Pair members share $(\widetilde S_i, N_i, q, s)$ and differ only in $E$:
subtracting the two equations,
$u^{\mathrm{sig}}_i-u^{\mathrm{ctr}}_i=\beta_E\Delta E_i+\Delta\varepsilon_i$
with $\E\Delta\varepsilon=0$, so
$\beta_E=\E[u^{\mathrm{sig}}-u^{\mathrm{ctr}}]/\Delta E$, free of
$(\alpha,\beta_S,\gamma)$. If $\Delta\varepsilon$ is symmetric around zero,
the tie-aware hit
\[
  H=\ind\{u^{\mathrm{sig}}>u^{\mathrm{ctr}}\}
    +\tfrac12\ind\{u^{\mathrm{sig}}=u^{\mathrm{ctr}}\}
\]
satisfies $\E H>1/2$ iff $\beta_E\Delta E>0$ and $\E H=1/2$ at the no-response
null. Calibration invariance follows because replacing $p$ by a strictly
increasing $m(p)$ preserves all pairwise orderings and ties.

(iii) On \Mtwo{} there is no injected evidence ($E\equiv0$). For a
momentum-only score with independent noise, and with non-trend evidence
balanced across labels, \cref{prop:swap} applies to the systematic component:
the aligned--reverse AUC gap has the sign of the momentum effect under
non-degenerate trend--label coupling. We use this as an over-identifying
diagnostic for shortcut dependence, not as an unconditional ``if and only if''
test for $\beta_S=0$ in arbitrary response functions.

(iv) For fixed $i$, varying $q$ over the prompt bank at fixed $(s_0)$ gives
$u_{iq}=c_i+\gamma_q+\varepsilon_{iq}$; the within-window variance across
prompts is $\operatorname{Var}_q(\gamma_q)+\operatorname{Var}(\varepsilon)$,
identifying $\operatorname{Var}(\gamma_q)$ up to the noise floor, which is
estimated from repeat queries at fixed $(q,s)$; similarly for renderer
variation and $\operatorname{Var}(\gamma_s)$ via RFI.
\end{proof}

\subsection{Proof of \texorpdfstring{\Cref{prop:completeness}}{the Split Sufficiency Claim}}

\begin{proof}
We exhibit the confounded pairs on the link scale.

\textbf{\Mzero{} alone.} Take $f_1$ with parameters
$(\alpha,\beta_S,\beta_E)=(0,1,0)$ and $f_2=(0,1,1)$. On \Mzero{} every sample
has $E=0$, so the two responders have identical response distributions and
identical values of every \Mzero{} functional (OverConf, TBI, $\rho$,
selective risk), yet one is grounded and the other is not.

\textbf{\Mone{} paired contrast alone.} Take $f_1=(0,0,1)$ and $f_2=(2,5,1)$. The
paired contrast cancels $\alpha+\beta_S\widetilde S$ exactly, so PSS and the
paired mean contrast are identical, yet $f_2$ is heavily trend-biased and
miscalibrated in the null market.

\textbf{\Mtwo{} alone.} On \Mtwo{} windows there is no injected evidence, but
incidental real evidence $Z^{\mathrm{inc}}_E$ covaries with $S$ (the
observational coupling of \cref{thm:impossibility}). Take
$f_1=(0,1,0)$ (pure shortcut) and $f_2=(0,0,1)$ responding only to
$Z^{\mathrm{inc}}_E$. When
$\operatorname{corr}(Z^{\mathrm{inc}}_E,S)\to1$ on the \Mtwo{} window
distribution, both responders induce the same score ordering, hence the same
aligned/balanced/reverse AUCs and the same gap.

\textbf{Core identification and over-identification.} Given \Mzero{} and \Mone{}:
\Mzero{} recovers $(\alpha,\beta_S)$ by \cref{prop:ident}(i); the \Mone{}
contrast recovers $\beta_E$ by \cref{prop:ident}(ii). Thus \Mzero{}+\Mone{}
identify $(\alpha,\beta_S,\beta_E)$ under the working model and matched-pair
assumptions. \Mtwo{} overidentifies the trend coordinate under a flipped
mechanism and acts as a specification test: a discrepancy between \Mzero{}- and
\Mtwo{}-implied trend behavior indicates response nonlinearity, incidental
evidence coupling, or mechanism sensitivity. This is the sense in which the
splits are jointly diagnostic, not a claim that \Mtwo{} is required for point
identification of the three coordinates.
\end{proof}

\subsection{The Trend-Confounder Swap}

\begin{proposition}[Trend-confounder swap]
\label{prop:swap}
If a score $f(I)$ is a monotone function of past momentum and contains no
additional label-related signal, then its AUC is higher on
$\Mtwo_{\mathrm{align}}$ than on $\Mtwo_{\mathrm{reverse}}$ whenever labels are
trend-aligned in the former and trend-reversed in the latter.
\end{proposition}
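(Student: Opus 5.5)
We first formalize the two \Mtwo{} mechanisms, because the statement is only meaningful once ``trend-aligned'' and ``trend-reversed'' are made precise. Let the \Mtwo{} windows have a common momentum marginal $S\sim G$. The aligned split draws labels with $\Prb_{\mathrm{align}}(Y=1\mid S=s)=\pi(s)$, where $\pi$ is nondecreasing and nonconstant on the support of $G$. The reverse split uses the same windows with $\Prb_{\mathrm{rev}}(Y=1\mid S=s)=1-\pi(s)$. If the designs only guarantee a decreasing label probability in the reverse split, we note at the end how the argument adapts.

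We read ``monotone function of past momentum with no additional label-related signal'' as $f(I)=h(S)+\eta$. Here $h$ is monotone and, after replacing $f$ by $-f$ if necessary, nondecreasing. The noise $\eta$ is independent of $(S,Y)$; the noiseless case is $\eta\equiv0$. AUC is tie-aware:
\[
\mathrm{AUC}=\Prb(f_1>f_0)+\tfrac12\Prb(f_1=f_0),
\]
where $f_1$ and $f_0$ are the scores of independent positive and negative draws.

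The first step is to show that the aligned mechanism makes positives stochastically larger in $S$. By Bayes' rule, the law of $S$ given $Y=1$ has $G$-density proportional to $\pi(s)$, and the law given $Y=0$ has density proportional to $1-\pi(s)$. Their ratio $\pi/(1-\pi)$ is nondecreasing, so the monotone likelihood ratio property holds. This gives first-order stochastic dominance $S\mid Y{=}1\succeq S\mid Y{=}0$. Because $h$ is nondecreasing and $\eta$ is independent of $(S,Y)$, dominance passes to the scores: $f_1\succeq f_0$. For two independent draws with $f_1$ stochastically dominating $f_0$, we get $\Prb(f_1>f_0)\ge\Prb(f_1<f_0)$, which is equivalent to $\mathrm{AUC}_{\mathrm{align}}\ge\tfrac12$.

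The second step is a symmetry argument for the reverse split. Under $1-\pi$, the class-conditional laws of $S$ are exactly those of the aligned split with the labels exchanged: $S\mid Y{=}1$ in the reverse split has the law of $S\mid Y{=}0$ in the aligned split, and vice versa. Since $\eta$ does not depend on $Y$, the same exchange holds for the scores. Exchanging the roles of $f_1$ and $f_0$ turns $\Prb(f_1>f_0)$ into $\Prb(f_1<f_0)$ and leaves the tie term unchanged. Hence $\mathrm{AUC}_{\mathrm{rev}}=1-\mathrm{AUC}_{\mathrm{align}}$, and the gap is $2\,\mathrm{AUC}_{\mathrm{align}}-1\ge0$. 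If the reverse split only requires a nonincreasing $\pi_{\mathrm{rev}}$, we drop the symmetry and rerun the MLR argument with reversed dominance to get $\mathrm{AUC}_{\mathrm{rev}}\le\tfrac12$ directly.

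The main obstacle is strictness, since the statement says ``higher''. Strictness fails in three degenerate cases: $h$ is constant on the relevant range, $\pi$ is constant (no coupling), or $h$ is flat exactly where $\pi$ varies. So we make ``non-degenerate coupling'' explicit: there must exist a set of positive $G\otimes G$ mass of pairs $s<s'$ with $\pi(s)<\pi(s')$ and $h(s)<h(s')$. To get a strict inequality we write
\[
\Prb(f_1>f_0)-\Prb(f_1<f_0)
=\iint\Big[\pi(s)\big(1-\pi(s')\big)-\pi(s')\big(1-\pi(s)\big)\Big]\,\phi(s,s')\,dG(s)\,dG(s')\Big/c,
\]
where $\phi(s,s')=\Prb(h(s)+\eta>h(s')+\eta')$ for independent copies $\eta,\eta'$, and $c>0$ is the product of the two class probabilities. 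The bracket equals $\pi(s)-\pi(s')$. Symmetrizing over $(s,s')$ gives an integrand $(\pi(s)-\pi(s'))\big(\phi(s,s')-\phi(s',s)\big)\ge0$, and this integrand is strictly positive on the non-degenerate set. The remaining care is to check that $\phi(s,s')>\phi(s',s)$ whenever $h(s)>h(s')$ under independent noise; this holds for continuous noise. For atomic noise we would instead assume that $\eta$ has no atoms, or restrict to the noiseless case $\eta\equiv0$.
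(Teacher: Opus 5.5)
Your core argument is the same as the paper's. You write AUC as a Mann--Whitney probability, use monotonicity of $\Prb(Y=1\mid S=s)$ to get first-order dominance of the positives' momentum over the negatives', push it through the increasing score, and conclude $\mathrm{AUC}>\tfrac12$ on the aligned split and $\mathrm{AUC}<\tfrac12$ on the reverse split. The paper does this for noiseless, strictly increasing $h$ and asserts strictness directly from strict dominance. Your label-exchange identity $\mathrm{AUC}_{\mathrm{rev}}=1-\mathrm{AUC}_{\mathrm{align}}$ and your symmetrized integral are extras. The identity needs a common momentum marginal across the two splits; your fallback, which matches the paper's route, avoids that assumption. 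Together they buy additive noise, non-strict $h$, and an explicit non-degeneracy condition.

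One step is wrong: the reduction ``after replacing $f$ by $-f$ if necessary, nondecreasing'' is not without loss of generality. Since $\mathrm{AUC}(-f)=\Prb(f_1<f_0)+\tfrac12\Prb(f_1=f_0)=1-\mathrm{AUC}(f)$, the replacement flips the sign of the aligned--reverse gap. So for nonincreasing $h$ your own computation gives $\mathrm{AUC}_{\mathrm{align}}<\mathrm{AUC}_{\mathrm{rev}}$, and the statement is false as literally worded for decreasing scores. You should make ``$h$ nondecreasing'' a hypothesis, as the paper's proof does by taking $h$ strictly increasing. Alternatively, state the conclusion as in \cref{prop:ident}(iii): the gap has the sign of the momentum effect.

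Separately, your caveat about atomic noise is unnecessary. With $a=h(s)-h(s')>0$ and $D=\eta-\eta'$, you have $\phi(s,s')-\phi(s',s)=\Prb(-a<D\le a)\ge\sum_k\Prb\big(\eta\in[ka,(k+1)a)\big)^2>0$ for any noise law. So strictness holds on your non-degenerate set with no condition on $\eta$.
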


\begin{proof}
Let $S$ be a scalar past-trend statistic, and suppose the model score is
$f(I)=h(S)$ with strictly increasing $h$. AUC can be written as a Mann--Whitney
probability:
\[
  \mathrm{AUC}=\Prb(f(I^+)>f(I^-))+\frac{1}{2}\Prb(f(I^+)=f(I^-)),
\]
where $I^+$ and $I^-$ are independently sampled positive and negative examples.
Because $h$ is increasing, this is equivalent to comparing $S^+$ and $S^-$.

In the aligned split, the conditional probability $\Prb(Y=1\mid S=s)$ is
increasing in $s$. Therefore the distribution of $S$ among positives
first-order stochastically dominates the distribution of $S$ among negatives,
with strict dominance whenever the label probability varies with $S$. The
aligned AUC is then greater than $1/2$. In the reverse split,
$\Prb(Y=1\mid S=s)$ is decreasing in $s$, so the stochastic ordering is reversed
and the AUC of the same increasing score is less than $1/2$. Thus
\[
  \mathrm{AUC}(\Mtwo_{\mathrm{align}})
  >
  \mathrm{AUC}(\Mtwo_{\mathrm{reverse}})
\]
under non-degenerate trend-label coupling. The balanced split enforces
$\Prb(Y=1\mid S\in b)=1/2$ inside each momentum bucket, so a score that is only a
function of the bucket has no within-bucket ranking information. This proves
that the aligned--reverse gap is a diagnostic for momentum shortcuts rather than
a proof of financial skill. If the score contains another label-related
component, the statement need not hold without balancing that component across
the two split mechanisms.
\end{proof}

\subsection{Proof of \texorpdfstring{\Cref{prop:ces}}{the CES Causal Contrast}}

\begin{assumption}[Edit comparability and locality]
\label{asm:edits}
(i)~\textbf{Matched magnitude:} for each pair, the rendered displacement
satisfies $d(I,I^{E})=d(I,I^{C})$ for a fixed visual metric $d$ along the same
edit path (the released generator matches glyph-level $\ell_2$ displacement).
(ii)~\textbf{Locality:} $T_E$ alters only glyphs rendering $Z_E$ and $T_C$ only
glyphs rendering $Z_C$; the rest of the canvas is bit-identical.
\end{assumption}

\begin{proof}
Under \cref{asm:edits}(ii), the edit pairs realize
$I^{E}=r(T_EZ_E,Z_C,Z_R)$ and $I^{C}=r(Z_E,T_CZ_C,Z_R)$ with all other canvas
content bit-identical, so
\[
  \mathrm{CES}_i=|g(Z_E,Z_C,Z_R)-g(T_EZ_E,Z_C,Z_R)|
               -|g(Z_E,Z_C,Z_R)-g(Z_E,T_CZ_C,Z_R)| ,
\]
and taking expectations gives $\E[\mathrm{CES}]=\kappa$ directly; positivity of
$\kappa$ is the definition of operational faithfulness of the reported
evidence under the edit family. For the bias bound, suppose
\cref{asm:edits}(i) fails. Define the magnitude-matched counterfactual edit
$\tilde I_i^E$ by moving along the same edit path as $I_i^E$ until
$d(I_i,\tilde I_i^E)=d(I_i,I_i^C)$, and set
$\Delta_i^{\mathrm{path}}=d(I_i^E,\tilde I_i^E)$. By $L$-Lipschitz continuity
of $f$ in $d$,
\[
\big||f(I_i)-f(I_i^{E})|-|f(I_i)-f(\tilde I_i^{E})|\big|
\le L\,\Delta_i^{\mathrm{path}} .
\]
The CES computed with $\tilde I_i^E$ has expectation $\kappa$ by the first
part, so $|\E[\mathrm{CES}]-\kappa|\le L\,\E\Delta^{\mathrm{path}}$.
\end{proof}

The diagnostic is intentionally weaker than full causal identification of the
model's internal features. It tests whether the evidence the model itself
reports is more influential than matched controls under the benchmark's edit
family; a non-positive CES does not prove the model ignores all financial
visual features.

\subsection{Invariance Diagnostics}

\begin{proposition}[Invariance diagnostics]
\label{prop:invariance}
Any responder that is a measurable function of $(S,Z_E,N)$ alone---in
particular any grounded responder---is $G$-invariant. RFI and PFI are the
orbit-variance functionals of $G_r$ and $G_q$ respectively, and vanish exactly
on $G$-invariant responders. For the order-two color subgroup $c\in G_r$
(red-up $\leftrightarrow$ green-up), the paired statistic $\E[p\circ c-p]$
tests color-heuristic dependence directly.
\end{proposition}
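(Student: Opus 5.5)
The plan is to set up $G$-invariance so that all three claims reduce to one structural fact: no $g\in G$ changes the semantic triple. By the definition in \cref{sec:invariance}, $G=G_r\times G_q$ acts on a query $(I,q)=(r(W,s),q)$ by $g\cdot(r(W,s),q)=(r(W,g_r s),g_q q)$. The same window $W$ is kept, so the triple $(S,Z_E,N)$ is left fixed. That triple is a coarsening of $W$, and neither factor touches $W$. I will assume the paper's premise that the rendered image determines $(S,Z_E,N)$ through the injective renderer, as in the proof of \cref{thm:impossibility}. I will also read $N$ as nuisance structure of the window, not of the style; otherwise the claim fails. With both in place, the semantic map $\phi(I,q)=(S,Z_E,N)$ satisfies $\phi\circ g=\phi$ for every $g\in G$.

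Invariance of measurable responders is then immediate. If $f=h\circ\phi$ for a measurable $h$, we get $f\circ g=h\circ\phi\circ g=h\circ\phi=f$. A grounded responder, as defined before \cref{thm:impossibility}, is monotone in $Z_E$ at fixed $(S,N,s)$. That definition alone still allows dependence on $s$, so the ``in particular'' clause needs care. I will state it for grounded responders that are measurable in $(S,Z_E,N)$, which is the reading the sentence intends.

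For the orbit-variance claim, I will define $\mathrm{RFI}(f)=\E_W\operatorname{Var}_{s\sim\mu_r}\big(f(r(W,s),q_0)\big)$ and $\mathrm{PFI}(f)=\E_W\operatorname{Var}_{q\sim\mu_q}\big(f(r(W,s_0),q)\big)$. Here $\mu_r,\mu_q$ are the sampling laws over the style and prompt banks, and I take them to have full support on the orbits. If $f$ is $G$-invariant, each inner variance is the variance of a constant, so it is zero. Conversely, suppose $\mathrm{RFI}(f)=0$. Then for $W$-a.e.\ window, $f$ is $\mu_r$-a.s.\ constant on the $G_r$-orbit. Full support on the finite bank upgrades this to exact constancy. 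The same argument applies to PFI, so both vanish iff $f$ is invariant, $W$-a.e. This step is the main obstacle. The equivalence holds only modulo null sets and needs the full-support assumption, so I will state that assumption explicitly. For the empirical metrics, I will also note that they target these population quantities only up to the query-noise floor of \cref{prop:ident}(iv).

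For the color subgroup, $c$ is an involution in $G_r$. For an invariant $f$, $p\circ c-p\equiv0$, so $\E[p\circ c-p]=0$; a nonzero mean therefore certifies non-invariance and hence color dependence. Under the null, the paired differences $D=p\circ c-p$ over windows whose style is drawn symmetrically under $c$ satisfy $D\overset{d}{=}-D$, because $c^2=\mathrm{id}$ swaps the roles of the two renderings. This symmetry justifies the Wilcoxon signed-rank test used in \cref{sec:results}. I will close by noting that the converse direction is one-sided: $\E D=0$ does not imply invariance.
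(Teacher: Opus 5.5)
Your route matches the paper's: semantic preservation fixes $(S,Z_E,N)$, zero orbit variance means constancy on orbits, and $c$-invariance forces $p\circ c-p\equiv 0$. Your caveat about grounded responders is a correct sharpening. The definition in \cref{sec:impossibility} only requires monotonicity in $Z_E$ at fixed $(S,N,s)$, so it still allows $s$-dependence, whereas the paper's proof simply asserts that grounded responders factor through $(S,Z_E,N)$.

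Your justification of the Wilcoxon test, however, defeats the test. Suppose the law of $(W,s)$ is invariant under $s\mapsto cs$. Then $D(W,cs)=p(r(W,s))-p(r(W,cs))=-D(W,s)$, so $D\overset{d}{=}-D$ for \emph{every} responder; the argument never uses the null. Under that design $\E[p\circ c-p]=0$ identically, and the signed-rank test rejects at rate $\alpha$ whatever color heuristic the model uses. That contradicts the claim that the statistic tests color dependence. The fix has two parts. First, orient the pairs: the base rendering is always in one convention, and $D$ is the other convention minus the base. Second, derive the symmetry of $D$ from the null itself: under $c$-invariance, $D$ is a difference of two repeat-query noise draws. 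It is symmetric when those draws are exchangeable, and identically zero without query noise.

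The converse in your orbit-variance step also fails for the product group $G=G_r\times G_q$. You compute RFI at a single prompt $q_0$ and PFI at a single style $s_0$. Vanishing of both then gives constancy only along the two axes of the style--prompt grid, not on whole $G$-orbits. For example, $f(r(W,s),q)=\ind\{s\neq s_0\}\ind\{q\neq q_0\}$ has both indices equal to zero, even with full support on the banks, but is not $G$-invariant. One fix is to require the $G_r$-orbit variance to vanish at every prompt, e.g.\ by averaging RFI over the prompt bank. Then $f(r(W,s),q)=f(r(W,s_0),q)=f(r(W,s_0),q_0)$ follows. Alternatively, invoke a no-interaction structure such as the additive $\gamma_q+\gamma_s$ in \cref{eq:structural}. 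Otherwise, weaken the claim: RFI vanishes exactly on responders that are $G_r$-invariant at $q_0$, and PFI on responders that are $G_q$-invariant at $s_0$.
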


\begin{proof}
Elements of $G$ act on $(s,q)$ and fix $(S,Z_E,N)$ by definition of
semantic preservation. If $f(I,q)=\phi(S,Z_E,N)$ for measurable $\phi$, then
for any $(g_r,g_q)\in G$,
$f(r(W,g_rs),g_qq)=\phi(S,Z_E,N)=f(r(W,s),q)$: $f$ is constant on orbits.
Grounded responders are of this form by definition. RFI and PFI average the
within-orbit variance of $p$ over base windows; a responder is $G$-invariant
iff every orbit variance is zero (up to query noise), iff both fragility
indices are zero. For the color subgroup $\{e,c\}$ with $c^2=e$, the
orbit is the pair $\{I,cI\}$ and the paired mean $\E[p\circ c-p]$ is zero for
any $c$-invariant responder; a one-sample location test on the paired
differences (we use Wilcoxon signed-rank) therefore tests color-heuristic
dependence directly.
\end{proof}

\subsection{Proof of \texorpdfstring{\Cref{prop:deconfound}}{Overlap-Weighted Artifact Robustness}}

\begin{proof}
Index injected pairs by their background features $X$ (the window summary
statistics available to the generator classifier) and write the pair contrast
$D=\zeta(X)+a(X)+\Delta\varepsilon$, where $\zeta(X)$ is the semantic response
to the injected edit and $a(X)$ is a recorded-background artifact component of
the contrast. The observed \Mone{} pair population has background law
$\Prb_{\Mone}(dx)\propto e(x)\,\Prb(dx)$ while the organic population has
$\Prb_{\Mzero}(dx)\propto(1-e(x))\,\Prb(dx)$, where $\Prb$ is the pooled law.
Weighting \Mone{} pairs by $w(X)=1-e(X)$ produces expectations under the tilted
law $\Prb_{\Mone}^{w}(dx)\propto e(x)(1-e(x))\Prb(dx)$---the overlap
population \citep{li2018balancing,crump2009dealing}---which assigns mass only
where organic windows with the same recorded backgrounds exist. The weighted
estimand is therefore
\[
  \tau_{\mathrm{ow}}
  =
  \E_{\Prb_{\Mone}^{w}}\!\left[D\right]
  =
  \E_{P_{\mathrm{ow}}}\!\left[\zeta(X)+a(X)\right].
\]
Thus overlap weighting changes the target population; it does not algebraically
remove $a(X)$ from an \Mone{} pair contrast. The equality
$\tau_{\mathrm{ow}}=\E_{P_{\mathrm{ow}}}\zeta(X)$ requires an additional
restriction such as pair-invariant artifacts, $\E_{P_{\mathrm{ow}}}a(X)=0$, or
an organic-control contrast that estimates and subtracts $a(X)$. Estimation
replaces $e$ by a grouped cross-validated estimate $\hat e$; consistency for
the overlap-population total contrast follows from consistency of $\hat e$ and
boundedness of $w$, and pair-level bootstrap is used for inference because
pairs are the independent design unit.
\end{proof}

\begin{remark}
The audit reports both unadjusted and adjusted estimates. Stability after
overlap weighting weakens explanations based solely on recorded background
imbalance, but it does not rule out unrecorded artifacts or pair-differential
artifacts without the additional restrictions stated above.
\end{remark}

\subsection{Proof of \texorpdfstring{\Cref{prop:mde}}{the MDE Bounds}}
\label{app:mde}

\begin{proof}
\textbf{PSS.} The tie-aware estimator $\widehat{\mathrm{PSS}}$ is a mean of $n$
i.i.d.\ scores $H_i\in\{0,\tfrac12,1\}$ with null mean $1/2$ and variance at
most $1/4$. The two-sided level-$\alpha$ normal test rejects when
$|\widehat{\mathrm{PSS}}-\tfrac12|>z_{1-\alpha/2}/(2\sqrt n)$, and power
$1-\beta$ against alternative $\tfrac12+\delta$ requires
$\delta\ge(z_{1-\alpha/2}+z_{1-\beta})/(2\sqrt n)$. With $\alpha=0.05$,
$\beta=0.2$: $\delta\ge 2.80/(2\sqrt n)=1.40/\sqrt n$, which is $0.075$ at
$n=351$ scored pairs.

\textbf{TBI.} The quintile contrast is a difference of two means with $qn$
samples each and response standard deviation $\sigma_p$; the standard
two-sample formula gives
$\mathrm{MDE}=(z_{1-\alpha/2}+z_{1-\beta})\,\sigma_p\sqrt{2/(qn)}$. With
$n=800$, $q=0.2$, and the median observed $\sigma_p=0.14$:
$2.80\times0.14\times\sqrt{2/160}=0.044$.

\textbf{CES.} The estimator is a paired mean with empirical pair standard
deviation $\sigma_{\mathrm{CES}}$ (observed $0.05$--$0.12$ across models,
median ${\approx}0.07$), so
$\mathrm{MDE}=(z_{1-\alpha/2}+z_{1-\beta})\,\sigma_{\mathrm{CES}}/\sqrt n$:
$0.011$ at the API budget $n=300$ and $0.004$ at the open-model budget
$n=2000$.

\textbf{Converse.} Distinguishing $\beta_E=0$ from $|\beta_E|=\delta$ from $n$
matched pairs is at least as hard as distinguishing
$\mathrm{Bernoulli}(\tfrac12)^{\otimes n}$ from
$\mathrm{Bernoulli}(\tfrac12+c'\delta)^{\otimes n}$ for the induced hit
sequence (data-processing). The KL divergence between the product laws is
$n\,\mathrm{KL}(\mathrm{Ber}(\tfrac12)\|\mathrm{Ber}(\tfrac12+c'\delta))
\le 4nc'^2\delta^2/(1-4c'^2\delta^2)$, and by the Bretagnolle--Huber
inequality any test with both errors $\le\alpha$ requires the KL to exceed
$\log\big(1/(4\alpha(1-\alpha))\big)$, giving $n\ge c\log(1/\alpha)/\delta^2$.
\end{proof}

\section{Anytime-Valid Sequential Evaluation}
\label{app:evalues}

API evaluation accrues cost per query, and runs are routinely stopped early
(budget limits, provider deprecation, interim results). Classical fixed-$n$
tests are invalid at data-dependent stopping times; the following protocol is
valid when applied to conditionally mean-zero pair scores or to independent
episode/block scores.

\textbf{E-process for pairwise sensitivity.} Order scored pairs
$k=1,2,\dots$ and let
$H_k=\ind\{\text{correct sign}\}+\frac12\ind\{\text{tie}\}\in\{0,\tfrac12,1\}$.
Under the shortcut null $\E[H_k\mid\mathcal F_{k-1}]=1/2$, the betting process
\[
  E_0=1,\qquad
  E_n=\prod_{k\le n}\big(1+\lambda_k\,(H_k-\tfrac12)\big),
  \qquad \lambda_k\in(-2,2)\ \text{predictable},
\]
is a nonnegative supermartingale with $\E E_n\le1$, so by Ville's inequality
$\Prb(\exists n:E_n\ge1/\alpha)\le\alpha$
\citep{ville1939etude,ramdas2023game}. Rejecting the null the first time
$E_n\ge1/\alpha$ is valid at every stopping rule, including ``the
budget ran out'' and ``the run looked done.'' We set $\lambda_k$ by the
aGRAPA/empirical-Kelly rule of \citet{waudbysmith2024betting}, clipped to
$[-1,1]$, which adapts to the unknown effect size; mixture (one-sided) variants
follow \citet{grunwald2024safe}.

\textbf{Dependence.} Pairs within an asset--date episode are dependent.
Randomizing pair order avoids replay artifacts but does not, by itself, make
the conditional null hold under arbitrary within-episode dependence. Validity
under block dependence is obtained by betting on independent episode-level
scores, such as the episode mean tie-aware hit, or by constructing a valid
block-level e-value under an exchangeable/permutation null and multiplying
these e-values across episodes. The fixed-sample results in the paper use the
block bootstrap of \cref{app:asymptotics}; the sequential replay is reported as
episode-level evidence rather than as a pair-level martingale under dependent
pairs.

\textbf{Two-sided and negative effects.} Because the empirical phenomenon
includes anti-semantic responses (conditional sign accuracy below
$1/2$), the release also tracks the mirrored process with
$\lambda_k\mapsto-\lambda_k$; the pair $(E_n^{+},E_n^{-})$ gives an
anytime-valid two-sided test at level $2\alpha$ when the betting units satisfy
the conditional null. The released code reports both pair-order replays
(descriptive, useful for budgeting) and episode-level e-values (valid under the
block-dependence model). This separation keeps the economic value of sequential
monitoring without relying on random order to remove within-episode dependence.

\section{Influence Functions and Block Asymptotics}
\label{app:asymptotics}

All reported metrics are plug-in functionals of paired contrasts. Write
$\theta(P)$ for any of: the paired mean contrast ($\beta_E$ estimand), PSS,
TBI, OverConf, or weighted AUC. Each has a first-order expansion
$\theta(\widehat P_n)-\theta(P)=\frac1n\sum_i\psi_\theta(D_i)+o_p(n^{-1/2})$
with bounded influence function $\psi_\theta$: for means and paired means
$\psi$ is the centered observation; for PSS it is the centered tie-aware hit;
for TBI it is the within-quintile centered response with signs by bucket; for
AUC it is the standard two-sample Hájek projection of the Mann--Whitney
kernel; for the overlap-weighted versions, weights are bounded by construction
($w=1-\hat e\in[0,1]$) so the weighted influence functions remain bounded, with
an additional estimation term for $\hat e$ that is $o_p(1)$ under grouped
cross-fitting.

Under the episode-block dependence model (independence across asset--date
episodes, arbitrary dependence within), the blocked CLT gives
$\sqrt{B}\,(\theta(\widehat P)-\theta(P))\Rightarrow
\mathcal N\big(0,\operatorname{Var}(\bar\psi_b)\big)$, where $B$ is the number
of episodes and $\bar\psi_b$ the within-episode influence average. The released
block bootstrap resamples episodes with all prompt/renderer variants of a base
window kept inside their episode \citep{kunsch1989block}, which consistently
estimates $\operatorname{Var}(\bar\psi_b)$; FDR control across the model
$\times$ metric grid follows \citet{benjamini1995fdr}. This places the released
uncertainty protocol on an estimation-theoretic footing: the bootstrap is the
variance estimator implied by the influence-function expansion, not a
heuristic.

\section{Diagnostic Estimators and Statistical Protocol}
\label{app:diagnostics}

Let $\mathcal{I}$ denote a set of evaluated images and let $p_i$ be the parsed
probability. We use the following estimators.
\[
  \widehat{\mathrm{OverConf}}_{\Mzero}
  =\frac{1}{|\mathcal{I}_{\Mzero}|}\sum_{i\in\mathcal{I}_{\Mzero}}
  |p_i-1/2|.
\]
For top and bottom momentum quintiles $Q_{\mathrm{top}}$ and
$Q_{\mathrm{bottom}}$,
\[
  \widehat{\mathrm{TBI}} =
  \left|
  \frac{1}{|Q_{\mathrm{top}}|}\sum_{i\in Q_{\mathrm{top}}}p_i -
  \frac{1}{|Q_{\mathrm{bottom}}|}\sum_{i\in Q_{\mathrm{bottom}}}p_i
  \right|.
\]
The reported $\rho(p,M)$ is Spearman correlation between $p_i$ and past
momentum $M_i$.

For matched \Mone{} pairs $(I_i^{\mathrm{sig}},I_i^{\mathrm{ctr}})$ with
injected direction $y_i\in\{0,1\}$, pairwise signal sensitivity is the
tie-aware directional hit rate
\[
  \widehat{\mathrm{PSS}} =
  \frac{1}{n}\sum_{i=1}^n
  \left[
  \ind\big\{(2y_i-1)\big(p(I_i^{\mathrm{sig}})-p(I_i^{\mathrm{ctr}})\big)>0\big\}
  +\frac12
  \ind\big\{p(I_i^{\mathrm{sig}})=p(I_i^{\mathrm{ctr}})\big\}
  \right],
\]
whose shortcut/no-directional-information null is $1/2$. We also report the
strict hit rate (ties counted as misses), the tie rate, and conditional sign
accuracy given a non-tie because these quantities separate non-response from
misreading (\cref{sec:results}). M1 AUC is computed on the binary injected-label
task. We report both because AUC can be affected by global score calibration,
whereas PSS is invariant to monotone recalibration
(\cref{prop:ident}(ii)).

Prompt fragility and renderer fragility are paired variance estimates:
\[
  \widehat{\mathrm{PFI}}_i=\mathrm{Var}_{q\in\mathcal{Q}}(p_{iq}),\qquad
  \widehat{\mathrm{RFI}}_i=\mathrm{Var}_{s\in\mathcal{S}}(p_{is}).
\]
For publication tables, these are averaged over base windows; by
\cref{prop:invariance} they are the orbit-variance functionals of the prompt
and renderer groups.

Selective risk uses coverage threshold $\tau$ on either confidence or distance
from $1/2$. Let $A_\tau$ be the accepted set. We report
\[
  \mathrm{coverage}(\tau)=\frac{|A_\tau|}{n},\qquad
  \mathrm{risk}(\tau)=\frac{1}{|A_\tau|}\sum_{i\in A_\tau}
  \ind\{\hat y_i\ne y_i\},
\]
and the full risk--coverage curves in \cref{fig:reliability}. For \Mzero-A,
the preferred behavior is low coverage or probabilities near $1/2$; for
high-signal \Mone{} examples, the preferred behavior is higher coverage with
lower risk.

\subsection{Confidence Intervals and Multiple Comparisons}

Financial windows are temporally and cross-sectionally dependent. We therefore
avoid treating every rendered image as an independent unit. The default
bootstrap unit is an asset--date block or market episode; all prompt and
renderer variants derived from the same base window remain in the same bootstrap
block. For paired comparisons, such as \Mone{} counterfactual pairs and renderer
variants, we resample pairs rather than individual images
(\cref{app:asymptotics} gives the asymptotic justification).

For each metric, we compute percentile bootstrap confidence intervals from
block resamples. When comparing many models and horizons, we control the
false-discovery rate using Benjamini--Hochberg correction. The benchmark release
stores the block assignment for each sample so that confidence intervals can be
recomputed by independent users.

\subsection{Structural Regression and Artifact-Adjustment Implementation}
\label{app:reg-impl}

The structural regression of \cref{tab:joint-regression} estimates
\cref{eq:structural} by OLS on the link scale over all parsed \Mzero/\Mone{}
responses of each model (all styles and prompt variants present in the logs),
with probabilities clipped to $[0.01,0.99]$, momentum standardized, injected
direction $E\in\{-1,0,+1\}$ from the generator records, and style/prompt
dummies. Standard errors are clustered at the base-window level. The
artifact-adjusted estimates of \cref{tab:artifact-adjusted} use a logistic
propensity for $\Prb(\Mone\mid X)$ on the generator-audit features plus
source/interval/regime indicators, fit with 5-fold cross-validation grouped by
symbol hash (AUC $0.892$, consistent with the released audit's $0.933$ from a
boosted model); pair weights are $w=1-\hat e$ evaluated on the neutral pair
member, and intervals are pair-level percentile bootstrap with $2{,}000$
resamples. All recomputations run from released logs without model queries.

\section{Additional Protocol Details}
\label{app:protocol}

\subsection{Data Processing}

All input charts hide ticker symbols, company names, dates, and absolute prices.
Prices are normalized to the first close in the window:
\[
  \widetilde P_{\tau}=100P_{\tau}/C_{t-L+1}.
\]
Volume is transformed using a within-window z-score or min--max normalization,
depending on renderer mode. The main setting uses $L=60$ and $H=5$; robustness
settings use $L\in\{30,60,120\}$ and $H\in\{1,5,10,20\}$. Windows with missing
OHLC values or non-positive prices are removed before rendering.

\subsection{Rendering}

The renderer randomizes color convention, axes, grid, background, moving
averages, volume panels, and aspect ratio. Each rendered image is accompanied by
metadata for candle body boxes, wick coordinates, volume-bar boxes, moving
average intersections, and rule-specific regions such as range boundaries or
the last-five-candle region. These metadata enable causal edits and evidence
evaluation but are not exposed to the evaluated model.

\subsection{Prompt and Parsing Protocol}

The prompt requires a compact JSON object with fields for direction,
probability, confidence, abstention, semantic tags, evidence regions, and a
brief explanation. Outputs that cannot be parsed under this schema are retained
in raw logs but are not treated as successful structured predictions.

We do not request chain-of-thought. The benchmark evaluates observable decisions
and evidence claims, not hidden reasoning. This also reduces the chance that a
model receives credit for a long but unfalsifiable explanation.

\subsection{Release Structure}

The intended release has three layers. The first layer is a source manifest:
data source, download timestamp, symbol list, frequency, date range, field
definition, cleaning rule, and hash; for commercial APIs the manifest also
records the configured model identifier string and evaluation window used for
each run. The second layer is generation code: \Mzero--\Mthree{} samplers,
renderer code, prompt bank, random seeds, and configuration files. The third
layer is the derived benchmark package: images, JSON metadata, labels, parser
outputs, evaluation scripts, and the recomputation scripts for
\cref{tab:joint-regression,tab:artifact-adjusted,fig:reliability}. When a raw
source cannot be redistributed, the release provides scripts and hashes rather
than the raw OHLCV file.

\section{Additional Results}
\label{app:additional-results}

\begin{table}[t]
\centering
\small
\setlength{\tabcolsep}{5pt}
\caption{Main diagnostics. The metrics are paired with a failure mode so that evaluation does not collapse to a single accuracy number.}
\label{tab:metrics}
\resizebox{\linewidth}{!}{%
\begin{tabular}{>{\raggedright\arraybackslash}p{0.20\linewidth}>{\raggedright\arraybackslash}p{0.28\linewidth}>{\raggedright\arraybackslash}p{0.42\linewidth}}
\toprule
Metric & Split & Interpretation \\
\midrule
OverConf, TBI & \Mzero & Confidence or momentum dependence when the benchmark provides no directional information. \\
M1 AUC, PSS & \Mone & Ability to recover injected candlestick evidence and distinguish matched counterfactuals. \\
M2 Gap & \Mtwo & Sensitivity to whether past trend is aligned with, independent of, or opposed to the label. \\
PFI, RFI & Paired prompt/renderer subsets & Fragility under semantically equivalent prompts and visually equivalent renderings. \\
CES & Evidence-editing subset & Whether the model's stated evidence regions are more influential than matched controls. \\
Risk--coverage & All splits with abstention & Whether abstention concentrates errors in uncertain or null conditions. \\
\bottomrule
\end{tabular}
}
\vspace{-0.8em}
\end{table}

\begin{table}[t]
\centering
\small
\setlength{\tabcolsep}{4pt}
\caption{Shadow-market splits and their intended diagnostic roles. The benchmark is designed so that each split has an interpretable ideal response rather than only a leaderboard target.}
\label{tab:splits}
\resizebox{\linewidth}{!}{%
\begin{tabular}{>{\raggedright\arraybackslash}p{0.10\linewidth}>{\raggedright\arraybackslash}p{0.25\linewidth}>{\raggedright\arraybackslash}p{0.29\linewidth}>{\raggedright\arraybackslash}p{0.25\linewidth}}
\toprule
Split & Controlled mechanism & Primary diagnostic & Desired model behavior \\
\midrule
\Mzero & Paired null labels, zero-drift futures, or within-momentum balancing & No-signal overconfidence and trend bias & $\probup\approx 0.5$ or abstain; low dependence on past momentum \\
\Mone & Local OHLCV edits inject rule-defined candlestick evidence with matched counterfactuals & Visual grounding of candlestick semantics & Move probability toward injected signal; cite edited components \\
\Mtwo & Trend-label relation is aligned, balanced, or reversed & Momentum shortcut reliance & Avoid large aligned--reverse performance gap \\
\Mthree & Same local pattern appears under trend, range, or high-volatility regimes & Regime-conditioned confidence and abstention & Adjust coverage and confidence to regime uncertainty \\
\bottomrule
\end{tabular}
}
\vspace{-0.8em}
\end{table}

\begin{table}[htbp]
\centering
\footnotesize
\caption{Traditional non-VLM baselines used for context. LightGBM is included only as a leakage/upper-control diagnostic, not as a model contribution.}
\label{tab:rules}
\begin{tabular*}{\linewidth}{@{\extracolsep{\fill}}llrrrr@{}}
\toprule
Split & Baseline & Acc. & AUC & OverConf & TBI \\
\midrule
M0 & random & -- & -- & 0.000 & 0.000 \\
M0 & past 20 momentum & -- & -- & 0.211 & 0.717 \\
M1 & random & 0.272 & 0.500 & 0.000 & 0.000 \\
M1 & trend & 0.381 & 0.463 & 0.244 & 0.743 \\
M1 & candlestick pattern rule set & 0.417 & 0.964 & 0.113 & 0.095 \\
M1 & lightgbm technical indicators & 0.924 & 0.994 & 0.135 & 0.026 \\
M2 & random & 0.498 & 0.500 & 0.000 & 0.000 \\
M2 & past 20 momentum & 0.502 & 0.504 & 0.211 & 0.715 \\
M3 & candlestick pattern rule set & 0.628 & 0.664 & 0.075 & 0.150 \\
M3 & lightgbm technical indicators & 0.794 & 0.814 & 0.074 & 0.046 \\
\bottomrule
\end{tabular*}
\end{table}

\begin{table}[htbp]
\centering
\footnotesize
\caption{Diagnostic positive controls. These trained controls are not part of the benchmark-only contribution; they verify that the benchmark can reward models explicitly optimized for injected shadow-market rules.}
\label{tab:dcc-controls}
\begin{tabular*}{\linewidth}{@{\extracolsep{\fill}}lrrrrrr@{}}
\toprule
Model & M0 OverConf & M0 TBI & M1 AUC & PSS & M2 Gap & CES \\
\midrule
DCC-VLM-Small & 0.099 & 0.057 & 0.662 & 0.675 & 0.111 & -0.009 \\
DCC-VLM-Full & 0.157 & 0.114 & 0.963 & 0.909 & -0.465 & 0.207 \\
DCC-VLM-Full-Cal & 0.058 & 0.012 & 0.952 & 0.832 & -0.019 & 0.097 \\
\bottomrule
\end{tabular*}
\end{table}

\begin{table}[t]
\centering
\footnotesize
\caption{Benchmark leakage and artifact audits. M0 metadata AUC near 0.5 supports the null-label construction; generator distinguishability is disclosed as a diagnostic risk for some splits.}
\label{tab:audit}
\begin{tabular*}{\linewidth}{@{\extracolsep{\fill}}lrp{0.6\linewidth}@{}}
\toprule
Audit & Value & Interpretation \\
\midrule
M0 metadata AUC & 0.499 & M0 metadata leakage check; ideal is near 0.5. \\
Generator clf. M0 & 0.933 & High values indicate split/source artifacts to disclose as diagnostic risk. \\
Generator clf. M1 & 0.933 & High values indicate split/source artifacts to disclose as diagnostic risk. \\
Generator clf. M2 & 0.498 & High values indicate split/source artifacts to disclose as diagnostic risk. \\
Generator clf. M3 & 0.581 & High values indicate split/source artifacts to disclose as diagnostic risk. \\
\bottomrule
\end{tabular*}
\end{table}

\end{document}